\begin{document}

\title{Hyperinvariant Spin Network States -- An AdS/CFT Model from First Principles}

\author{Fynn Otto \orcidlink{0009-0009-0382-042X}}
\email[]{fynn.otto@uni-siegen.de}
\affiliation{Naturwissenschaftlich-Technische Fakultät, Universität Siegen,
Walter-Flex-Straße 3, 57068 Siegen, Germany}

\author{Refik Mansuroglu \orcidlink{0000-0001-7352-513X}}
\email[]{refik.mansuroglu@univie.ac.at}
\affiliation{University of Vienna, Faculty of Physics, Boltzmanngasse 5, 1090 Vienna, Austria}

\author{Norbert Schuch \orcidlink{0000-0001-6494-8616}}
\affiliation{University of Vienna, Faculty of Physics, Boltzmanngasse 5, 1090 Vienna, Austria}
\affiliation{University of Vienna, Faculty of Mathematics, Oskar-Morgenstern-Platz 1, 1090 Vienna, Austria}

\author{Otfried Gühne \orcidlink{0000-0002-6033-0867}}
\affiliation{Naturwissenschaftlich-Technische Fakultät, Universität Siegen,
Walter-Flex-Straße 3, 57068 Siegen, Germany}

\author{Hanno Sahlmann \orcidlink{0000-0002-8083-7139}}
\affiliation{Department of Physics, Friedrich-Alexander-Universität Erlangen-Nürnberg (FAU), Staudtstraße 7, 91058 Erlangen, Germany}

\date{\today}

\begin{abstract}
    We study the existence and limitations of hyperinvariant tensor networks incorporating a local $\SUT$ symmetry. As discrete implementations of the anti de-Sitter/conformal field theory (AdS/CFT) correspondence, such networks have created bridges between the fields of quantum information theory and quantum gravity. Adding $\SUT$ symmetry to the tensor network allows a direct connection to spin network states, a basis of the kinematic Hilbert space of loop quantum gravity (LQG). We consider a particular situation where the states can be interpreted as kinematic quantum states for three-dimensional quantum gravity. \change{We demonstrate an entanglement-geometry correspondence realized in certain quantum states of the gravitational field in LQG, showing that models similar to the ones introduced by [F. Pastawski et al., JHEP 06, 149 (2015)] can be obtained as quantum gravitational models from first principles.} We provide examples of hyperinvariant tensor networks, but also prove constraints on their existence in the form of no-go theorems that exclude absolutely maximally entangled states as well as general holographic codes from local $\SUT$-invariance, \change{as long as the logical index also transforms covariantly under the gauge group}. We calculate surface areas as expectation values of the LQG area operator and discuss further possible constraints as a consequence of a decay of correlations on the boundary.
\end{abstract}

\maketitle

\section{Introduction}
The anti-de Sitter/conformal field theory (AdS/CFT) correspondence has emerged as a profound insight into the nature of quantum gravity, offering a concrete realization of the holographic principle \cite{Maldacena_1999, Gubser_1998}. At its core, the AdS/CFT duality posits an equivalence between a gravitational theory in a bulk AdS spacetime and a conformal field theory (CFT) defined on its boundary. This relationship provides a compelling framework for the idea that spacetime geometry, and perhaps even gravity itself, can emerge from patterns of quantum entanglement \cite{swingle2012constructingholographicspacetimesusing, VanRaamsdonk:2010pw, Jacobson_2016}. The correspondence has become a cornerstone in the study of emergent spacetime, where geometric and gravitational properties in the bulk can be interpreted through quantum field-theoretic quantities on the boundary.

In recent years, discrete tensor network models have been employed to better understand and visualize aspects of the AdS/CFT correspondence, especially within the domains of quantum information theory and quantum gravity. Notably, holographic codes such as the so-called HaPPY code \cite{Pastawski_2015} have demonstrated how essential features of the duality, such as bulk reconstruction, entanglement entropy scaling, and error correction, can be captured using quantum information-theoretic structures. These tensor network models have not only deepened our conceptual understanding of holography but also forged new connections between high-energy physics and quantum information.

Despite their successes, existing tensor network models of holography remain largely ad hoc. While constructions such as the HaPPY code capture certain structural features of the AdS/CFT correspondence—like entanglement wedge reconstruction and boundary-bulk duality—they are not derived from any underlying theory of quantum gravity. Recent work has proposed to weaken the entanglement structure of known holographic codes in order to capture algebraic properties of the boundary theory in an inductive limit \cite{chemissany2025infinitetensornetworkscomplementary}, as the HaPPY code construction alone would be in violation of Lorentz symmetry.

In this work, we address this gap by embedding tensor network models into the framework of loop quantum gravity (LQG). In LQG, quantum states of geometry are represented by spin network states -- tensor networks built from representations of a group $G$ and corresponding intertwiners. \change{LQG thereby yields a direct derivation of quantum gravitational models with tensor network structure.} The group $G$ depends on spacetime dimension and metric signature considered. In the case of gravity in 3+1 dimensions, as well as in 3d Euclidean gravity relevant to the present work, the group $G$ is given by SU(2). In these cases, the spin networks naturally implement a local SU(2) Gauss constraint associated with the gauge symmetry of a Palatini-type formulation of general relativity, ultimately coming from Lorentz invariance \cite{Rovelli_1995b,Thiemann:2007pyv}. \change{We show that a consistent formulation with LQG is not directly compatible with the original proposal of perfect tensors in \cite{swingle2012constructingholographicspacetimesusing} and \cite{Pastawski_2015}, but brings us to so-called hyperinvariant tensor networks, which is a weaker isometric structure.} Before diving into the detailed analysis, we introduce the most important concepts for hyperinvariant tensor networks (Section \ref{sec:HTN}) and loop quantum gravity (Section \ref{sec:lqg}).

\change{In holographic tensor-network constructions, isometric encoding maps are closely tied to states whose boundary subsystems exhibit (approximately) maximally mixed reduced density matrices. Previous works have shown strong limitations for the compatibility of isometric tensors (or equivalently quantum states with maximally mixed marginals, cf. section \ref{sec:HTN}) that are also gauge covariant with respect to a continuous symmetry group. In \cite{Cao2024}, it was shown that stabilizer states can only probe a degenerate section of the area operator.} No-go theorems for Werner states \cite{bernards2024multiparticle, Bernards_PHD}, SU(2) intertwiners with valence four \cite{Li_2017} or local dimension two \cite{Mansuroglu_2023} can be understood in a more general context of a no-go result for covariant codes \cite{Hayden_2021} and from lower bounds for the infidelity of recovery \cite{Faist_2020}. In Section \ref{sec:nogo}, we provide an alternative proof for a no-go theorem, stating that SU(2) intertwiners cannot describe $n$-party quantum states whose two-body marginals $\rho_{1k}$ are maximally mixed for all $k = 2, ..., n$. This also excludes holographic codes with bulk degrees of freedom. 

There are, however, examples where all but one of the marginals $\rho_{1k}$ are maximally mixed (see Section \ref{sec:examples}). As instances of so-called hyperinvariant tensor networks \cite{evenbly2017hyperinvariant}, which have been studied as weaker structures for holographic codes \cite{Steinberg_2023, steinberg2024far}, we study their embedding into a spin network structure. We demonstrate that certain aspects of AdS/CFT are realized within quantum states of the gravitational field in LQG, thus providing a first-principles foundation for previously proposed models such as those in Refs.~\cite{Pastawski_2015, evenbly2017hyperinvariant, Steinberg_2023}. 

These so-called {\it Hyperinvariant, (SU(2)-) Invariant Tensors}, or HITs, thus combine properties from both approaches. As hyperinvariant tensor networks, they show an approximate duality between bipartite boundary entanglement and bulk geodesic distance, which can now be interpreted as an expectation value of the loop quantum gravitational length operator, by viewing the HIT as a quantum state of geometry. We study the geometric properties of HITs in detail in Section~\ref{sec:geo}, where we show that, on all examples that we found, the LQG length is proportional to the graph length, that is, the number of intersection points between the geodesic and the state's underlying graph, thus validating the entanglement-geometry correspondence on HITs. We further discuss the calculation of surface areas on HITs (geometrically corresponding to spatial volumes) and with this a derivation of negative curvature of the HIT network as a quantum state version of the Poincaré disc. We close on a note on two-point correlations on the discretized boundary and how decay of correlations can put further constraints on the shape of the HIT. 

Interesting results on tensor networks only containing SU(2) intertwiners have already been obtained in Ref.~\cite{Han:2016xmb} in the context of a coarse graining procedure and in Ref.~\cite{Chirco:2017vhs,Chirco:2019dlx,
Colafranceschi:2020ern,Chirco:2021chk,Colafranceschi_2022} in the context of group field theory, which is closely related to LQG. They include results of isometric embedding of bulk in boundary degrees of freedom, and Ryu-Takayanagi-type formulas. In contrast to the present work, these results employ averaging over random tensors, and, in some cases, a limit of high bond dimension. As a consequence, the relation to a specific theory of quantum gravity is not always clear. Here, we employ specific tensors (HITs), and the bond dimension is small and fixed in the examples of HITs that we consider. Consequently, we work with specific states in the LQG Hilbert space and can make precise statements about their quantum geometric properties, using established geometric operators.

The paper is organized as follows. Section~\ref{sec:HTN} introduces the necessary preliminaries, including the definitions of hyperinvariant tensors and their isometric properties. Motivated by but not necessarily dependent on any gravitational interpretation, section~\ref{sec:nogo} establishes no-go theorems that constrain the existence of certain tensor network structures, while also identifying those that are allowed. In section~\ref{sec:lqg}, we provide a concise introduction to loop quantum gravity, focusing on the aspects most relevant for the framework we consider. Section~\ref{sec:examples} shows explicit examples of quantum state built from HITs, whose geometrical properties determined by the action of the LQG length and area operators is discussed in section~\ref{sec:geo}.

\begin{figure}[tbp]
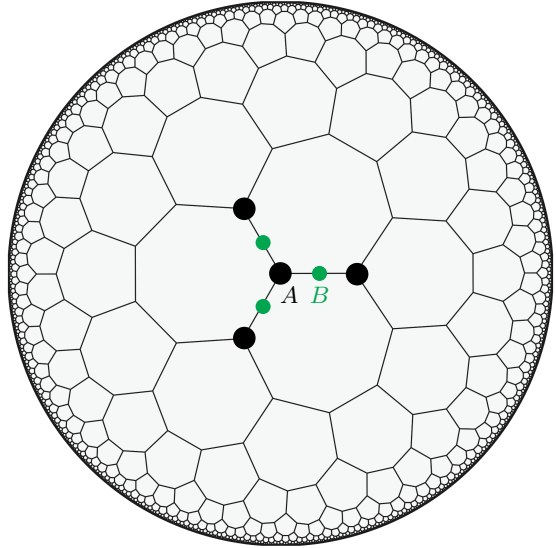

    \centering
    \begin{align*}
        \SNSdisc
    \end{align*}
    \caption{\textbf{Schematic view of the construction methods using a (7,3) tiling of the 
    Poincaré disc.} A hyperinvariant tensor network constructed from tensors $A$ on 
    all vertices and $B$ on all edges.}
    \label{fig:poincare_disk}
\end{figure}

\section{Hyperinvariant tensor networks}
\label{sec:HTN}

In this section, we review the construction and properties of hyperinvariant tensor networks introduced in Ref.~\cite{evenbly2017hyperinvariant} which have been used as toy models for AdS/CFT in previous works \cite{swingle2012constructingholographicspacetimesusing,Pastawski_2015, evenbly2017hyperinvariant, Steinberg_2023}.

\subsection{Construction of hyperinvariant tensor networks}
The tensor networks are constructed on hyperbolic tilings of the Poincar\'e disk to be compatible with the symmetry and negative curvature of the anti-de Sitter space. Hyperbolic $(p,q)$ tilings consist of $p$-gons with $q$ of them meeting at each vertex (see   \cref{fig:poincare_disk}) with $pq>2(p+q)$ ensuring a negative curvature for the quantum geometry, as we will see below. A $q$-valent (or $q$-partite) tensor $A_{i_1\ldots i_q}$ is placed on each vertex and a $2$-valent tensor $B_{i_1 i_2}$ on each edge. Indices along connected tensors are contracted resulting in a tensor with free indices at the boundary. 

These tensors have to fulfil several symmetry constraints. 
First, the tensors $A$ and $B$ have to be symmetric under cyclic permutations of the indices, i.e.\ $A_{i_q i_1\ldots i_{q-1}}=A_{i_1\ldots i_q}$ and $B_{i_1 i_2}=B_{i_2i_1}$. Second, $A$ has to be \emph{1-isometric}, i.e.\ \change{$A_{i_k;i_1\ldots i_{k-1}i_{k+1}\ldots i_q}$} defines an isometry\change{, i.e.\ preserves inner products,} from the $k$-th index to the rest of the indices for all $k$. The same holds for the tensor $B$ with $1$-isometric being equivalent to unitary in the bipartite case \cite{evenbly2017hyperinvariant,steinberg2024far}.
Finally, there is a constraint including two copies of the tensor $A$ and one of $B$. Namely,
\begin{equation}
    \sum_{\substack{i_q,j_2}} A_{i_1 i_2 \ldots i_q}B_{i_q j_2}A_{j_1 j_2 \ldots j_q}=V_{i_1j_1;i_2\ldots i_{q-1}j_3\ldots j_q}
\end{equation}
has to be an isometry from $i_1j_1$ to the complement. We summarize the requirements for hyperinvariant tensor networks with a graphical representation of these isometry conditions in the following definition.

\begin{definition}[Hyperinvariant tensor networks]\label{def:HIT}
    A hyperinvariant tensor network consists of tensors $A$ and $B$ being placed on the vertices and edges of a $(p,q)$-tesselation of the Poincar\'e disk respectively. The tensors $A$ and $B$ have to fulfil the isometry constraints 
    \begin{equation}
    \begin{aligned}
        \vcenter{\hbox{\isometryIL}}&\quad = \quad \vcenter{\hbox{\isometryIR}}\qquad \text{ and }\qquad\vcenter{\hbox{\isometrysingleB}}\quad = \quad \vcenter{\hbox{\isometryIR}}\\ &\text{ and }\qquad \vcenter{\hbox{\isometrydouble}}\quad = \quad \vcenter{\hbox{\isometryIIR}}\; 
        \label{eq:isometries_HIT}
    \end{aligned}
    \end{equation}
    \change{where each leg of a tensor represents a single index and straight lines correspond to the Kronecker-Delta $\delta_a^b$, i.e.\ they contract the indices of the tensors they are connected to.}
\end{definition}
The first holographic states based on tensor networks have been constructed in 
Ref.~\cite{Pastawski_2015}. They can be seen as a special case of hyperinvariant 
tensor networks featuring so-called \emph{perfect tensors} $A_{i_1\ldots i_n}$ 
defining isometries between every bipartition of the indices whereas the tensor 
$B_{i_1i_2}$ can be taken to be the identity.

The mentioned properties of the tensors can be translated to properties of corresponding quantum states. The tensor $T_{i_1\ldots i_n}$ being an isometry from a subset of indices $\mathcal A$ to the complement $\mathcal{A}^\complement$ translates to the marginal $\rho_{\mathcal A}:=\tr_{\mathcal{A}^\complement}\pro{\psi_T}$ of the quantum state
\begin{equation}
    \ket{\psi_T}=\sum_{i_1,\ldots,i_n} T_{i_1\ldots i_n} \ket{i_1\ldots i_n}
    \label{eq:state_tensor}
\end{equation}
being maximally mixed. Perfect tensors correspond to absolutely maximally entangled (AME) states that have been widely discussed in quantum information theory \cite{huber2017absolutely,rather_thirty-six_2022,rajchelmieldzioć2025absolutely}. These properties 
are summarized in the following definition.
\begin{definition}[$k$-isometric tensors and $k$-uniform quantum states]
    \phantom{a}\\ \vspace{-8mm}\\
    \begin{enumerate}[
    label=(\alph*)]
        \item The tensor $T_{i_1\ldots i_n}$ is $k$-isometric if it defines an isometry from any set of $k$ indices to the rest of the indices. It is called a perfect tensor if it is $\lfloor\frac n2\rfloor$-isometric.
        \item The quantum state $\ket{\psi_T}$ is $k$-uniform if all marginals $\rho_{\mathcal A}=\tr_{\mathcal{A}^\complement}\pro{\psi_T}$ on any set $\mathcal A$ of $k$ parties are maximally mixed: $\rho_{\mathcal A}\propto \ID$. It is called absolutely maximally entangled (AME) if it is $\lfloor\frac n2\rfloor$-uniform.
    \end{enumerate}
\end{definition}
\change{Note that the tensor $T$ is $k$-isometric if and only if the state $\ket{\psi_T}$ is $k$-uniform. Moreover, a $k$-uniform tensor is $l$-uniform for all $l\leq k$, because the preservation of inner products inherently extends to subsets of parties.

The} single-tensor constraint on $A$ is equivalent to the corresponding quantum state being $1$-uniform. The same holds for $B$ implying that the corresponding state is a maximally entangled bipartite state.

Finally, note that not only the tensors have counterparts on the quantum state level, but the contractions of indices to form a network, too. Contracting indices of connected tensors corresponds to entanglement swapping \change{where previously disconnected parties become entangled by an operation on the contracted indices/parties \cite{Zukowski_1993}.}

\subsection{Properties of hyperinvariant tensor networks}

\begin{figure}[t]
    \centering
    \includegraphics[width=0.95\linewidth]{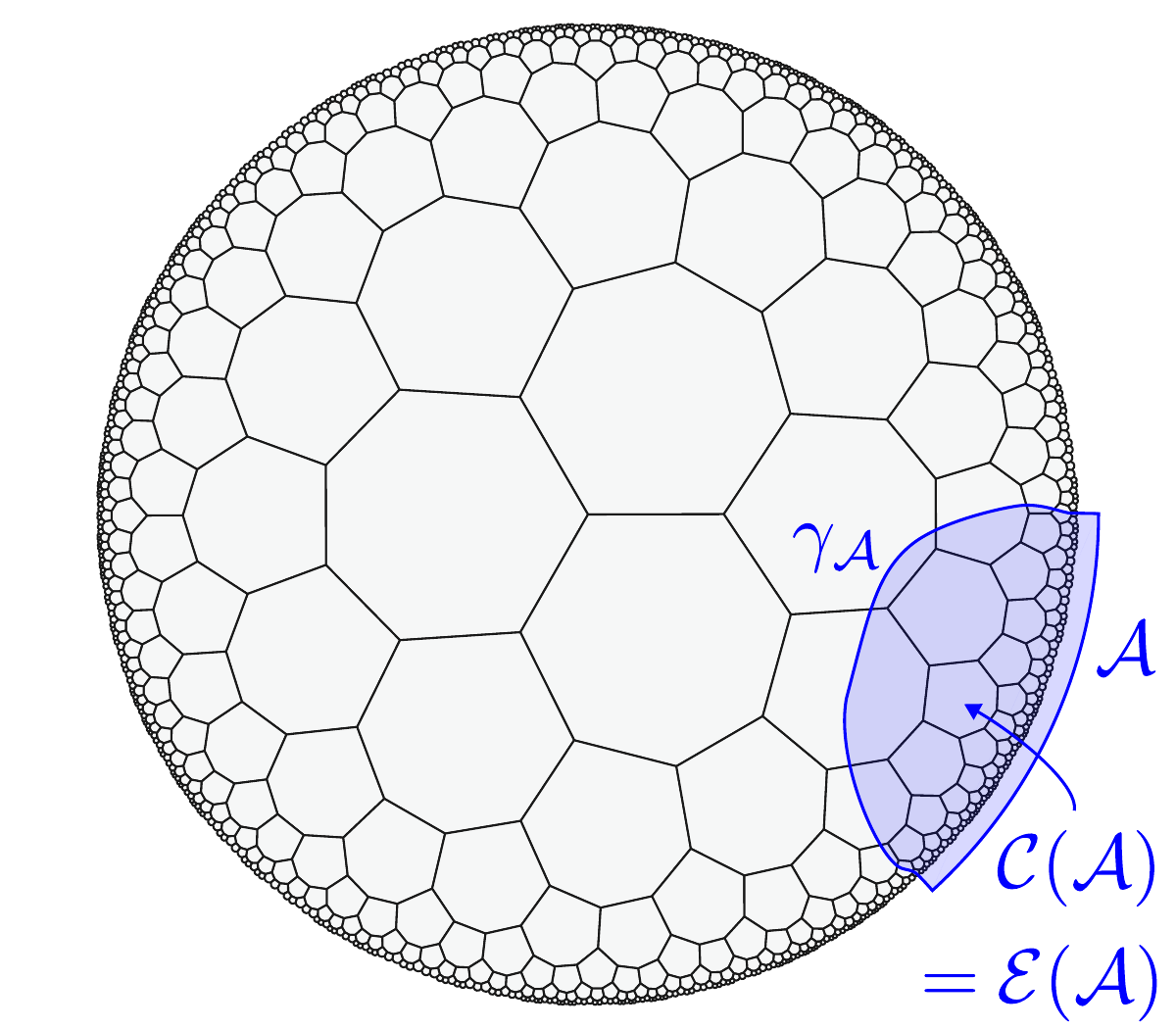}
    \caption{\textbf{Causal cone and entanglement wedge for a boundary region $\mathcal A$ in a $(7,3)$-tiling.} In this case, the causal cone and the entanglement wedge coincide.}
    \label{fig:causal_cone_entanglement_wedge}
\end{figure}

Hyperinvariant tensor networks give rise to important properties predicted by AdS/CFT. The entanglement structure on the boundary is related to the geometry of the bulk and thus the gravitational structure. For formalizing this, two important definitions are the causal cone and the entanglement wedge of a boundary region~$\mathcal A$.

\begin{definition}[Causal cone \cite{evenbly2017hyperinvariant}]
    Let $\mathcal{A}$ be a boundary region of a hyperinvariant tensor network. The reduced state on the boundary $\rho_{\mathcal{A}}= \tr_{\mathcal{A}^\complement} \pro{\psi}$ generally depends on the tensors in the bulk. A single tensor in the bulk belongs to the causal cone $\mathcal{C}(\mathcal{A})$ if the reduced state $\rho_{\mathcal{A}}$ nontrivially depends on its choice within the set of hyperinvariant tensors.
\end{definition}

\begin{definition}[Entanglement wedge \cite{Pastawski_2015}]
    The entanglement wedge $\EE(\mathcal A)$ of a boundary region $\mathcal A$ is the set of tensors in the bulk enclosed by $\mathcal A$ and the minimal surface\footnote{\emph{Minimal} is understood in the literature in terms of the number of intersections with the tensor network. We will show below that these surfaces are indeed minimal with respect of the quantum geometry of the tensor network state. If the minimal surface is not unique, we choose the one maximizing the size of the entanglement wedge.} (or geodesic) $\gamma_{\mathcal A}$ in the bulk whose end points coincides with the boundary of $\mathcal A$.
\end{definition}

The causal structure of such networks has been extensively studied in the framework of multiscale entanglement renormalization ansätze \cite{Vidal_2008,B_ny_2013, Chua_2017}, which covers hyperinvariant tensor networks. It has been shown that the causal cone and the entanglement wedge approximately coincide \cite{evenbly2017hyperinvariant}. For certain tilings, e.g., a (7,3)-tiling (cf.\ \cref{fig:causal_cone_entanglement_wedge}), the causal cone and the entanglement wedge match for any boundary region. For other tilings the difference between both vanishes with respect to their relative size in the limit of an infinitely large network.

Finally, in the special case of the tensors $A$ and $B$ being perfect, the bipartite entanglement entropy between a boundary region $\mathcal A$ and its complement $\mathcal{A}^\complement$ is related to the length of the geodesic $\gamma_{\mathcal{A}}$ via the Ryu-Takayanagi formula \cite{Ryu06,Pastawski_2015},
\begin{equation}\label{eq:rt}
    S_{\mathcal{A}}\propto L(\gamma_{\mathcal{A}}).
\end{equation}
Another prediction of AdS/CFT is the decay of two-point correlations with the distance between the two points. This has been addressed in Ref.~\cite{Bistro__2025}, where the authors use a specific realization of a hyperinvariant tensor network to realize these correlations.

\section{LQG states with boundary Qudits}
\label{sec:lqg}
In this section, we provide a brief overview of the core ideas from loop quantum gravity (LQG) that are relevant for our work. 
LQG is an approach to the quantization of gravity. It is based on a description of the gravitational field that is close to that of a gauge theory. In formulating the quantum theory, it also avoids the use of a classical space time geometry, to implement as much as possible the general covariance of the classical theory. As a result, the quantum field theory is very different than those used to describe other interactions. In the canonical approach relevant for the present work, an atomistic picture of spatial geometry emerges. 

A natural basis of quantum states, the spin network states, can be considered as special tensor network states constrained by an SU(2) symmetry implementing the gauge symmetry of gravity. The tensors describe quantum states of elementary chunks of space, and the network describes their arrangement. \change{Uncontracted indices at the boundary can be interpreted as matter degrees of freedom within LQG.} This perspective allows us to extend the hyperinvariant tensor construction introduced in Section~\ref{sec:HTN}: we define Hyperinvariant, SU(2)-Invariant Tensors (HITs) as linear combinations of spin network states. Establishing the relation between HITs and conventional spin networks is essential for computing expectation values of geometric operators in LQG, most notably the length and area operators, whose action on spin networks is well understood. For a detailed and rigorous introduction to the formalism of LQG, we refer the reader to \cite{Thiemann:2007pyv}.
In Sections \ref{sec:holonomies} and \ref{sec:spinnets} we will sketch how the Hilbert space of quantum states of geometry arises in LQG. Section \ref{sec:holonomies} explains the concept of parallel transport and holonomy, and how a part of the gravitational field can be encoded in those. This is important because the holonomies are well defined operators on the Hilbert space of spatial geometry, and they create line-like excitations that give rise to the spin network states that we identify with special tensor networks. The Hilbert space is described in Section \ref{sec:spinnets}, and Section \ref{sec:hits_lqg} connects it to the hyperinvariant tensor networks of Section~\ref{sec:HTN}. Finally, in Section \ref{sec:length_area}, we review the definition of length and area operators in LQG. These and other operators can be used to probe the quantum geometric properties of the HITs. 

\subsection{Geometry with Parallel Transport Maps}
\label{sec:holonomies}
The kinematic Hilbert space of loop quantum gravity is thoroughly studied throughout the past decades. Ultimately, it is derived from a Palatini-type formulation of general relativity in terms of a tetrad field $e$ and a connection one-form $\omega$, rather than a metric. 
The tetrad field determines the metric. The field equations are equivalent to Einstein's equations and metricity of $\omega$. In the canonical formulation of this theory, the variables become a (densitized) triad $E^a$, and a connection $C_a$ on a spatial slice $\Sigma$ of spacetime. \change{In both (2+1)d Euclidean and Lorentzian gravity}, $C_a$ is just the pullback of $\omega$ onto $\Sigma$\change{, which is two-dimensional.} Field equations take the form of constraints on $C$ and $E$ in this context, with the Gau{\ss} constraint (an SU(2) generalization of the well known Gau{\ss} law of electrodynamics) requiring gauge invariance, the diffeomorphism constraint requiring invariance under diffeomorphisms of $\Sigma$, and a Hamilton constraint as a dynamical law. Altogether, these constraints and the Hamilton equations of motion are equivalent to Einstein's equations.    

The connection $C$ induces a notion of parallel transport of vectors (in a certain fiber bundle) along curves $e$ in a spatial slice $\Sigma$. It is given by solving the parallel transport equation,

\begin{align}
    \dot v(s)^i + \sum_{k, \mu} C(e(s))_\mu {}^i {}_k \, \dot e(s)^\mu v(s)^k = 0,
    \label{eq:parallel_transport}
\end{align}
for a family of tangent vectors $v(s)$ at the point $e(s)$. The parallel transport can be encoded in so called holonomies $h_e$,  functionals in the connection $C_a$. Formally, the solution of Eq.~\eqref{eq:parallel_transport} can be written as $v(1) = h_{e}[v(0)]$ with the holonomy being a path-ordered exponential $h_e = \mathcal{P}\exp\{ - \int ds \sum_\mu C(e(s))_\mu \dot e(s)^\mu \}$ of the matrix-valued one-form $C_\mu$. In the case considered here, $C_\mu$ is an $\mathfrak{su}(2)$-valued one-form\footnote{Recall that the adjoint representation of the Lie algebra of SU(2) is isomorphic to the three-dimensional representation of $\mathfrak{so}$(3) via $X^a {}_b \to -i (*X)^a \sigma_a$ with the Pauli matrices $\sigma_a$.} and can hence be considered anti-symmetric, and generating three-dimensional rotations. The action on a vector of dimension $2j+1$ can be understood as a spin $j$ representation of SU(2), $\pi_j(h_e[A])$, acting as a map between the two tangent spaces:
\begin{align}
    \vcenter{\hbox{\holonomy}}
\end{align}
The $2j+1$-dimensional vector $v(0)$ is mapped from point $e(0)$ to $e(1)$, given a connection $C$. The limiting case where $\pi_j(h_e[C]) = \mathds{1}$ for all paths $e$ corresponds to trivial parallel transport, i.e., a flat connection \cite{Rovelli_1995,Thiemann:2007pyv,Lewandowski_2006}.
\begin{definition}[Holonomy]
    Let $e: [0,1] \to \mathcal{M}$ be a smooth curve on a manifold $\mathcal{M}$. A holonomy $h_e[C]$ is a solution to the parallel transport equation, Eq.~\eqref{eq:parallel_transport}. A spin-$j$ holonomy is a $2j+1$-dimensional irreducible representation of $\SUT$.
\end{definition}
In the canonical formulation of LQG, matrix elements of holonomies are well defined operators. They act as creation operators and produce gravitational excitations from a vacuum state devoid of geometry.  

Holonomies (both the classical maps and their quantum counterparts) transform non-trivially under gauge transformations $g:\Sigma\rightarrow \text{SU(2)}, x\mapsto g(x)$,
\begin{equation}
    h_e \mapsto g^{-1}(e(1))\; h_e\; g(e(0)). 
\end{equation}
Holonomies $\pi_j(h_e)$ taken in the spin $j$ irrep are transforming in an analogous fashion, with $\pi_j(g(x))$ replacing $g(x)$. 

In order to connect holonomies at their respective start- and endpoints in a covariant way, the vertices are decorated by a tensor describing a homomorphism of representations, a so called intertwiner. ``Covariant'' here means that the resulting structure again only transforms at the (unconnected) endpoints, and not at the vertices. Covariance also guarantees that a vector that is transported across a vertex can be mapped to (potentially many delocalized) vectors without losing information. \change{Consider the following example of a spin 1 vector $\psi(p_1)_k$ that is mapped by holonomies $h_{e_i}, i=1,2,3$ meeting at an intertwiner $I^k_{\mu_2 \mu_3}$. The spin $\frac{1}{2}$ vectors $\psi(p_{2/3})_{\mu_{2/3}}$ can be defined via the parallel transport
\begin{align}
    \vcenter{\hbox{\intertwiner}},
\end{align}
which corresponds to the algebraic formulation
\begin{align}
    \psi_{\mu_2} \otimes \psi_{\mu_3} = \psi_k \tensor{h}{_{e_1}^{k}_l} I^l_{\nu_2 \nu_3} \tensor{h}{_{e_2}^{\nu_2}_{\mu_2}} \tensor{h}{_{e_3}^{\nu_3}_{\mu_3}}.
\end{align}
}

\subsection{Spin Network States}
\label{sec:spinnets}
Physical states in LQG must, among other things, be gauge invariant. Therefore intertwiners are useful in their construction, as we will describe below. To be precise, we consider intertwiners according to the following definition.
\begin{definition}[Intertwiner]
\label{def:inv_int}
    Let $\ket\psi\in\bigotimes_{i=1}^n \HH_{d_i}$ be a $n$-partite quantum state and $R(g)=\bigotimes_{i=1}^n R^{(i)}(g)$ a representation of the group $G$ where each $R^{(i)}: G\to \operatorname{GL}(\CC^{d_i})$ is a $d_i$-dimensional representation of $G$. The state $\ket\psi$ is called \textbf{$G$-invariant} with respect to the representations $R^{(i)}$ if
    \begin{equation}
        R(g)\pro{\psi}R(g)^\dagger=\pro{\psi}
        \label{eq:symmetry}
    \end{equation}
    for all $g\in G$. The tensor $T$ defining $\ket\psi$ via Eq.~\eqref{eq:state_tensor} is called \textbf{$G$-intertwiner}. Interpreted as a map from one partition to its complement, the tensor $T: \bigotimes_{i\in \mathcal{I}} \HH_{d_i} \to \bigotimes_{i\in \mathcal{I}^\complement} \HH_{d_i}$ is called \textbf{covariant map}.
\end{definition}
Although in this section, we only need the definition for spin $j_i$ representations of $G=\SUT$ of dimension $d_i = 2j_i+1$, we later also consider U(1) representations and thus keep Definition \ref{def:inv_int} general. Note that states with invariance properties as in Eq.~(\ref{eq:symmetry}) have been widely discussed in quantum information theory \cite{Werner89, Eggeling_2001, Cabello_2003}. \change{For the definition of spin network states, we assume a slightly stricter condition, $R(g) \ket{\psi} = \ket{\psi}$, to avoid bookkeeping with global phases.}

The kinematic Hilbert space $\mathcal{H}_{\rm kin}$ of loop quantum gravity is finally constructed by holonomies and intertwiners contracted in an SU(2)-invariant fashion. For a given graph $\Gamma$, the graph Hilbert space 
\begin{equation}
    \mathcal{H}_\Gamma \cong L^2(SU(2)^{|E(\Gamma)|}, \text{d}g^{|E(\Gamma)|})
\end{equation}
is spanned by states created by holonomies along the edges $e \in E(\Gamma)$. 
$\text{d} g$ denotes the invariant measure on SU(2) \cite{ashtekar1993representationtheoryanalyticholonomy}. Gauge transformations act unitarily on this space, and the invariant states $\text{Inv}(\mathcal{H}_\Gamma)$ are the physically relevant ones. They form subspaces of $\mathcal{H}_{\rm kin}$, and in fact 
$\mathcal{H}_{\rm kin} = \bigoplus_\Gamma {\rm Inv}(\mathcal{H}_\Gamma)/\sim$, where $\sim$ indicates a certain equivalence relation that has to be taken into account when graphs partially overlap or are nested inside each other.\footnote{The precise statement is that $\mathcal{H}_{\rm kin}$ can be constructed as a certain direct limit. For details see Ref.~\cite{Ashtekar:1994mh}.} In the following, it is enough to restrict to particular graphs $\Gamma$. 

There is a practical orthogonal basis for Inv $\mathcal{H}_\Gamma$, consisting of specific tensor network states in which the holonomies are in irreducible representations and are connected by intertwiners.
\begin{definition}[Spin Network States]
    A state $\Psi$ in $\operatorname{Inv}(\mathcal{H}_\Gamma)$ is called a \textbf{spin network state} if it is of the form     
    \begin{equation}
        \Psi = \Tr \left[\pi_{j_1}(h_{e_1})\ldots \pi_{j_{|E(\Gamma)|}}(h_{e_{|E(\Gamma)|}})\; I_{1} \ldots I_{_{|V(\Gamma)|}}  \right], 
    \end{equation}
    where the $I$ are intertwiners for the $\pi_j$ meeting at a vertex, and the trace is a shorthand for the gauge invariant contractions.
\end{definition}
Matter can be included in spin network states by coupling it directly to an end of a holonomy $\pi_{j}(h_e)$, or to several holonomies meeting in a vertex via a suitable intertwiner. An integer spin $j$ can only be coupled to bosons and half-integer spin to fermions \cite{Mansuroglu_2021_kinematics, Mansuroglu_2021_fermions}. Matter degrees of freedom can be depicted by free indices, i.e. indices that are not coupled to an intertwiner, see Fig.~\ref{fig:SNS}.
\begin{figure}
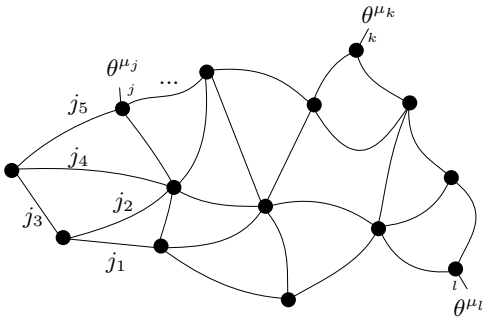

    \centering
    $\vcenter{\hbox{\SNS}}$
    \caption{\textbf{Pictorial representation of a spin network state with matter degrees of freedom.} The spin quantum numbers, $j_i$, determine the SU(2) representation of the holonomy along the corresponding edge. The vertices depict intertwiner and the matter fields, $\theta^\mu$, couple free indices in an SU(2)-invariant fashion.} 
    \label{fig:SNS}
\end{figure}

Next to the SU(2) Gauß constraint, \change{which implements the internal rotational symmetry of the metric}, spin network states are subject to two other constraints, one that accounts for diffeomorphism symmetry and one that identifies equivalent foliations into spacelike hypersurfaces $\mathcal{M}$, for technical details and derivations, see Refs.~\cite{Thiemann_1998, Thiemann:2007pyv}. 

\subsection{Hyperinvariant Loop Quantum Gravitational States}
\label{sec:hits_lqg}
In this paper, we are interested in $(p,q)$ tilings of the Poincaré disc as a graph description for spin network states. The Hilbert space consists of a boundary and a bulk subsystem $\mathcal{H}_{\rm bulk} \otimes \mathcal{H}_{\rm boundary}$. The boundary Hilbert space $\mathcal{H}_{\rm boundary}$ consists of a number $N$ of qudits with local dimension $d$ that relates to a discretized CFT. The bulk Hilbert space, on the other hand, consists of all degrees of freedom from loop quantum gravity, that are holonomies, intertwiners and possibly qudits living on bulk vertices. 

Given a $(p,q)$ tiling as shown in Fig.~\ref{fig:poincare_disk}, we want to define a class of states which are built from hyperinvariant tensors \cite{evenbly2017hyperinvariant, Steinberg_2023} and lie within the kinematic Hilbert space described above. To this end, we consider $\SUT$ intertwiners $A$ with valence $q$, which have an additional invariance under cyclic permutation of indices, and tensors $B$ with valence 2 that are additionally solutions to the isometry relations displayed in \cref{def:HIT}. These are the defining relations for hyperinvariant tensor network states as introduced in Refs.~\cite{evenbly2017hyperinvariant, Steinberg_2023} and summarized in Section \ref{sec:HTN}, just that the edges are described by holonomies carrying gravitational excitations weighted by spin labels $j$ and tensors that are also $\SUT$ invariant. 

The hyperinvariance condition is extended by the decoration of the edges by holonomies. Absorbing them into $B$ recovers Eq.~\eqref{eq:isometries_HIT} \change{using
\begin{align}
    \vcenter{\hbox{\pushingB}} =: \vcenter{\hbox{\pushingBIII}},
\end{align}
and dropping the tilde again to avoid clutter. By unitarity the holonomies including $B$ cancel on all indices that are being traced out and in the 2-isometry condition only one pair with $B$ and $\bar B$ survives.}

For higher valence $q$, more \change{than one index per tensor $A$ is pointing inward and an entanglement wedge boundary may intersect multiple legs of the same tensor (depending on the tiling).} Therefore additional isometries including three or more contractions of $A$ might be necessary to relate entanglement properties of the boundary state to geometric properties of the bulk that has been found in \cite{Ling_2019, Steinberg_2023}. We show later that $\SUT$ symmetry prohibits 2-uniformity, \change{which makes hyperinvariant tensor networks candidates for loop quantum gravitational states with an bulk-boundary correspondence, because they are isometries only on neighboring indices and not all permutations of $k$ indices as $k$-uniformity would require. We will see that this enables hyperinvariant tensor networks with valence $q \geq 4$, which are still required to be 2-isometric for neighboring indices, but not 2-uniform, to be viable candidates for LQG.}

More interesting solutions arise when allowing reducible $\SUT$ representations on the edges of the spin network state. On this ``fine-structure'', we impose $\SUT$ invariance and the isometry relations on the ``coarse structure''. A more accurate graphical depiction would hence be
\begin{align}
    &\hspace{0.5cm} \vcenter{\hbox{\isometryIIIL}} \quad = \quad \vcenter{\hbox{\isometryIIIR}} \qquad \text{and} \nonumber \\
    &\vcenter{\hbox{\isometryIVL}} \quad = \quad \vcenter{\hbox{\isometryIVR}},
    \label{eq:isometries_reducible}
\end{align}
where the thick lines have been decomposed into $k$ irreducible components (we depicted $k=3$ for the sake of simplicity). Since the holonomies are $\SUT$ representations and the intertwiners are SU(2)-invariant, the holonomies can be pushed to the input \change{and output indices using SU(2) invariance
\begin{align}
    \vcenter{\hbox{\pushing}} = \vcenter{\hbox{\pushingII}}.
\end{align}
With unitarity of $U$ the isometry relations ultimately become a property of the tensors $A$ and $B$.}

\subsection{Length and Area Operators}
\label{sec:length_area}
Our goal is to show a relation between bipartite entanglement on the boundary and minimal path length in the bulk. While previous literature simply considers a graph length, i.e. the number of intersection points between a path and a graph \cite{Pastawski_2015, Steinberg_2023}, we compare against expectation values of the loop quantum gravitational length operator, which can be derived by canonical quantization of the integrated line element in two dimensions \cite{Rovelli_1995}. Spin network states are the eigenstates of the length operator $L_\gamma$ of a curve $\gamma$ to the eigenvalue $\sqrt{j (j+1)}$ with $j$ being the spin quantum number of the edge intersecting $\gamma$:
\begin{align}
    \vcenter{\hbox{\LQGarea}},
\end{align}
where $j$ denotes the state projected onto the spin $j$ subspace. We can thus define the length operator as a sum of projectors.
\begin{definition}[Length Operator]
    Let $\gamma \hookrightarrow \mathcal{M}$ be a curve embedded in a manifold $\mathcal{M}$. The loop quantum gravitational length operator acts on the subsystems of holonomies $\bigotimes_{e \cap \gamma \neq \emptyset} \mathcal{H}_e$ which intersect $\gamma$ as
    \begin{align}
        L_\gamma = \int_\gamma dx \sum_j \sqrt{j(j+1)} P_x^{j},
    \label{eq:length_op}
    \end{align}
    with the operator $P_x^j$ projecting onto the spin $j$ subspace of holonomies at the point $x$.
\end{definition}
For the sake of simplicity, we have set the Planck length to one in Eq.~\eqref{eq:length_op}. It is straightforward to calculate the curve length connecting two points at the boundary of a quantum gravitational state by first decomposing it in the basis of spin network states, where the expectation value of $L_\gamma$ just becomes an average over the eigenvalues of spin network states. As we will see later, a decomposition into spin network states is not necessary to calculate the expectation value of the length operator in some simple cases. 

This is different for the 2-area, or 2D-volume operator, of a surface $\mathcal{S}$ which can be written as a sum of local operators $S_v$ acting on the Hilbert space of intertwiners at the vertices $v$ contained in $\mathcal{S}$ only \cite{Freidel_2003, Thiemann_1998_QSD, Long_2020}. Three-valent intertwiners are eigenstates of $S_v$ to an eigenvalue $a_{jkl}$ that depends only on the spin quantum numbers $j,k,l$ of the three indices of the intertwiner:
\begin{align}
    \vcenter{\hbox{\LQGvolume}},
\end{align}
\change{using the definition of the area operator from \cite{Thiemann_1998_QSD}, that is}
\begin{definition}[Area Operator]
    Let $\mathcal{S} \hookrightarrow \mathcal{M}$ be a surface embedded in a manifold of dimension 2. The loop quantum gravitational area operator acts on the local Hilbert spaces of intertwiners at vertices $v \in V(\Gamma)$ inside the surface $\mathcal{S}$
    \begin{align}
        &S_{\mathcal{S}} = \sum_{v \in \Gamma \cap \mathcal{S}} S_v \nonumber \\
        &= \sum_{v \in \Gamma \cap \mathcal{S}} \sqrt{ \left( \sum_{e_1 \cap e_2 = v} \sgn(\dot e_1(0), \dot e_2(0))  \vec X(e_1) \times \vec X(e_2) \right)^2},
    \label{eq:area_op}
    \end{align}
    by creating the unique three-valent intertwiners $X^i(e_i)$ with spins $(j_i, j_i, 1)$ on pairs of curves with linear independent tangent vectors $\dot e_1(0), \dot e_2(0)$.
\end{definition}
Three-valent intertwiners, as well as two-valent vertices with linear independent tangent vectors are eigenstates of $S_{v}$ as shown in \cite{Thiemann_1998_QSD}. A direct action on arbitrary SU(2) intertwiners can be understood for the squared operator $S_{v}^2$. The operator $X^i(e_1)$ creates an intertwiner with a free spin 1 index at the point $v$:
\begin{align}
    X^i(e_1)\left( \vcenter{\hbox{\threeVal}} \right) = \vcenter{\hbox{\grasping}}
    \label{eq:grasping}
\end{align}
For readability, we stick to three-valent intertwiners although Eq.~\eqref{eq:grasping} holds for any intertwiner. We also inserted a trivial spin $j$ holonomy between the newly created intertwiner and the old intertwiner, depicting trivial tensor contraction. Inserting the action of $\vec X(e)$ into $S_v^2$ thus yields a straightforward way to calculate its action. In Appendix \ref{app:grasp}, we show its action on the three-valent intertwiner example. However, iterating over the combinations of pairing pairs of edges makes the direct computation of the action of $S_v^2$ tedious. For an intertwiner of valence $q$, a number of $\mathcal{O}(q^4)$ many terms have to be accounted for. We stick to a calculation in a diagonal basis in section \ref{sec:geo}, where we calculate geometric properties of a HIT example enabled through the loop quantum gravitational derivations of length and area operator.

\section{Non-existence of  U(1)-invariant two-uniform states}
\label{sec:nogo}
Before we discuss examples of HITs, let us show clear limitations 
of what structures can or cannot be combined. We show that for a 
$\SUT$-invariant state, at least one of the two-body marginals cannot 
be maximally mixed, thus excluding $\SUT$-invariant AME states. 
Lifting the AME condition to specific isometries as in 
Eq.~\eqref{eq:isometries_HIT} allows the existence of HITs. 
\change{However, adding a single logical qudit onto the vertices requires stronger isometry conditions, which are again incompatible with $\SUT$-invariance and lead to the non-existence of $\SUT$-invariant holographic codes.}

\subsection{No Invariant AME States}
In this section, we show that $\SUT$-invariant AME states (equivalently invariant perfect tensors \cite{Li_2017}), which are necessary for a 
LQG equivalent to the toy models for the AdS/CFT correspondence in \cite{Pastawski_2015}, do not exist. We start by excluding the compatibility of weaker properties of the state, i.e.\ $\UO$-invariance and restrictions on a subset of the marginals.

From here on, we assume that the adjoint of the representation $R(g)$ of $\UO$ acts nontrivially, i.e.\ $R^{(k)}(g)[\cdot]R^{(k)}(g)^\dagger\neq \operatorname{id}(\cdot)$ for at least one party $k$. Otherwise, every state would be $\UO$-invariant with respect to $R(g)$.
\begin{lemma}\label{lem:inv_mixed_nogo}
    Let $\ket\psi\in\bigotimes_{i=1}^n \HH_{d_i}$ ($n\geq 4$) be 
    a $\UO$-invariant state with respect to the representations 
    $R^{(i)}$.
    Then, for every party $k \in \{ 1, ..., n \}$ with $R^{(k)}(g)[\cdot]R^{(k)}(g)^\dagger\neq \operatorname{id}(\cdot)$ there is a party $l\neq k$ such that the two-party marginal $\rho_{\{ k,l \}}$ fulfils 
    $\rho_{\{ k,l \}}\neq\frac{\ID}{d_k}\otimes \rho_l$. This implies 
    that $\rho_{\{ k,l \}}$ cannot be a maximally mixed two-particle state.
\end{lemma}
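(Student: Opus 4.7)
The plan is to exploit the $\UO$ charge conservation implied by the invariance of $\ket\psi$. First I would introduce Hermitian charge operators $Q_i$ via $R^{(i)}(g) = e^{-i g Q_i}$ and observe that the assumption $R^{(k)}(g)[\cdot]R^{(k)}(g)^\dagger \neq \operatorname{id}(\cdot)$ translates exactly to $Q_k$ not being a scalar multiple of $\ID$, i.e., $Q_k$ has at least two distinct eigenvalues. The invariance condition \eqref{eq-symmetry} applied for all $g$ then forces $\ket\psi$ to be an eigenstate of the total charge $Q = \sum_i Q_i$ with some eigenvalue $Q_\psi$; in particular $\mathrm{Var}_\psi(Q) = 0$.

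Because the $Q_i$ act on distinct tensor factors, they commute, and the variance expands as an honest double sum of covariances. Grouping terms around the fixed party $k$ and using $\mathrm{Cov}_\psi(Q_k, Q) = 0$, this gives
\begin{equation}
    \sum_{l \neq k} \mathrm{Cov}_\psi(Q_k, Q_l) \;=\; - \mathrm{Var}_\psi(Q_k).
\end{equation}
The useful consequence is that the right-hand side is strictly negative whenever $\rho_k = \ID/d_k$ and $Q_k$ has non-degenerate spectrum, forcing at least one term on the left to be nonzero.

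To close the argument by contradiction, I would assume $\rho_{\{k,l\}} = (\ID/d_k) \otimes \rho_l$ for \emph{every} $l \neq k$. Tracing over $l$ yields $\rho_k = \ID/d_k$, so indeed $\mathrm{Var}_\psi(Q_k) > 0$ by the nontrivial-adjoint hypothesis on party $k$. On the other hand, the assumed product structure of the two-body marginal makes the expectation of $Q_k \otimes Q_l$ factor as $\langle Q_k \rangle_\psi \langle Q_l \rangle_\psi$, killing every covariance on the left-hand side and contradicting the identity above. The complementary case $\rho_k \neq \ID/d_k$ is immediate, since the $k$-marginal of any supposed product form is already wrong.

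The only step that I see as a genuine check rather than routine is the very first one, where I equate ``nontrivial adjoint action of $R^{(k)}$'' with ``$Q_k$ has at least two distinct weights''. For $\UO$ this equivalence is immediate, because $R^{(k)}(g)$ commutes with all operators iff it is a phase times $\ID$, but it is worth flagging because the rest of the proof reduces to the elementary probabilistic identity $\sum_l \mathrm{Cov}_\psi(Q_k, Q_l) = 0$ for a conserved additive quantity, together with strict positivity of the single-party variance on a maximally mixed state.
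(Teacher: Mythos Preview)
Your argument is correct. Both your proof and the paper's start from the same observation---$\UO$ invariance of the pure state forces $\ket\psi$ to be an eigenvector of the total generator---but the contradiction machinery differs. The paper expands $\rho$ in a Bloch-type basis of traceless local generators (the ``sector length'' decomposition), then computes the operator $\Tr_{\{k\}^\complement}(\hat N \rho)$ and shows it has a non-identity component $\sum_a \tfrac{\beta_{k,a}}{d_k} T_a^{(k)}$ that cannot be cancelled once the coefficients $r^k_a$ and $r^{kl}_{ab}$ vanish. You instead take a single scalar projection of this operator identity: tracing against $Q_k$ turns the statement into $\mathrm{Cov}_\psi(Q_k,Q)=0$, and the assumed product form of the two-body marginals kills all cross-covariances while leaving $\mathrm{Var}_\psi(Q_k)>0$. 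Your route is more elementary and has a clean probabilistic reading (a conserved additive charge forces anti-correlation of local charges), whereas the paper's operator-level argument exhibits more of the state structure and plugs directly into the sector-length formalism used elsewhere in quantum information. One minor wording point: where you write ``$Q_k$ has non-degenerate spectrum'' you only need (and only use) that $Q_k$ is not a scalar multiple of $\ID$, i.e.\ has at least two distinct eigenvalues.
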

\begin{proof}
    The generator of the group action can be expanded as
    \begin{equation}\label{eq:U1_gen}
        \hat N = \sum_i N_i=\sum_{i}\left(\alpha_i\ID^{(i)} + \sum_{a} \beta_{i,a} T_a^{(i)}\right),
    \end{equation}   
    where $T_a^{(i)}$ are chosen traceless and act nontrivially only on the $i^\text{th}$ party. Its adjoint representation is nontrivial if and only if there is a party $k$ such that $\beta_{k,a}\neq 0$ for at least one $a$. We can restrict ourselves to the case of $\alpha_i=0$ for all $i\in\{1,\ldots,n\}$ as the identity only introduces global phases which do not affect the density matrix:
    \begin{equation}\label{eq:U1_gen_eff}
        \hat N =\sum_{i,a}\beta_{i,a} T_a^{(i)}.
    \end{equation}   
    We prove the Lemma by contradiction. Let us assume that there is a pure state $\rho$ being $\UO$-invariant with respect to the representations 
    $R^{(i)}$ and $\rho_{\{ k,l \}} = \frac{\ID}{d_k}\otimes \rho_l$ for all $l\neq k$ with $k$ being a party with a nontrivial generator, i.e.\ $\beta_{k,a}\neq 0$ for at least one $a$.
    
    We decompose the state $\rho=\pro{\psi}$ in strings of generators of the local (traceless) generators $T^{(i)}_a$ of the Lie algebra $\mathfrak{su}(d_i)$:
    \begin{align}
        \rho = \frac{1}{\prod_{i=1}^n d_i} \left( \mathds{1} + \sum_{j=1}^n P_j \right),
        \label{eq:state_decomp}
    \end{align}
    where
    \begin{align}
        P_j = \sum_{i_1 < ... < i_j} \sum_{a_1 = 1}^{d_{i_1}^2-1}\cdots\sum_{a_j = 1}^{d_{i_j}^2-1} r_{a_1, ..., a_j}^{i_1 ... i_j} \, T_{a_1}^{(i_1)} \otimes ... \otimes T_{a_j}^{(i_j)}
    \end{align}
    with real coefficients $r_{a_1, ..., a_j}^{i_1 ... i_j}$. The terms $P_j$ gather all terms with nontrivial support on exactly $j$ parties while the non-appearing subsystems feature an identity operator. We call $j$ the \emph{weight} of these terms. Note that similar decompositions have been
    studied in the context of so-called sector lengths of quantum states \cite{aschauer_2004_local,wyderka_2020_characterizing,serranoensástiga2025multiqubit}.

    The condition
    \begin{equation}
        \rho_{\{k,l\}}=\frac{\ID}{d_k}\otimes\rho_l
    \end{equation}
    on the marginals translates to
    \begin{equation}
        \rho_{\{k,l\}}=\frac{1}{d_k d_l} \mathds{1} + \frac{1}{d_k d_l} \sum_{j \leq 2} P_j\Big\vert_{\{k,l\}} = \frac{\ID}{d_k}\otimes\rho_l
    \end{equation}
    by applying the decomposition of the state as in \cref{eq:state_decomp} for all subsystems $l\neq k$. $P_j\Big\vert_{A}$ denotes the restriction of $P_j$ on $A$, that is, it contains only terms without support in $A^\complement$. From the orthogonality 
    of the $T_{a_1}^{(i_1)} \otimes \ldots \otimes T_{a_j}^{(i_j)}$ we deduce $r^k_a = 0$ and $r_{ab}^{kl}=0$ for all $a,b$ and $l\neq k$.

    The $\UO$-invariance condition $R(g)\rho R(g)^\dagger=\rho$ for a pure state $\rho=\ket{\psi}\bra{\psi}$ is equivalent to the infinitesimal relation $\hat N\ket\psi= c \ket\psi$ with $c\in\mathds{R}$. Consequently, we can express the invariance as
    \begin{equation}\label{eq:inv}
        \hat N \rho=c\rho.
    \end{equation}
    We exploit the decomposition of $\rho$ from \cref{eq:state_decomp} and examine the marginal at party $k$ on the LHS of \cref{eq:inv}:
    \begin{align}
        \Tr_{\{k\}^\complement}(\hat N \rho) &= \sum_a \frac{\beta_{k,a}}{d_k}T_a^{(k)}+\change{\frac{1}{\prod_{i=1}^n d_i}}\sum_{j=1}^{n}\Tr_{\{k\}^\complement}\left(\hat N P_j\right)
        \label{eq:contradiction}
    \end{align}
    As $r^k_a = 0$ and $r_{ab}^{kl}=0$ for all $a,b$ and $l\neq k$ the contribution of $\Tr_{\{k\}^\complement}\left( \hat N P_j\right)$ can only be proportional to the identity stemming from $\hat N P_1$. The contribution from $\sum_a \frac{\beta_{k,a}}{d_k}T_a^{(k)}$ is nonzero, because we required $\beta_{k,a}\neq 0$ for at least one $a$. Nevertheless, the marginal at party $k$ computed with the RHS of \cref{eq:inv} is proportional to the identity. This is a contradiction, because the generators $T_a^{(k)}$ are linearly independent from the identity matrix. Thus, there cannot be a state with the required properties.
\end{proof}
This Lemma directly excludes the compatibility of stronger properties of the marginals with \change{(special) unitary symmetries as all of the unitary groups have a $\UO$ subgroup}.
\begin{corollary}\label{cor:nogo}
    Let $\ket\psi\in\otimes_{i=1}^n \HH_{d_i}$ be a $n$-partite quantum state.
    \begin{enumerate}[label=(\alph*)]
        \item The state $\ket\psi$ cannot be 2-uniform and U(1)-invariant.
        \item The state $\ket\psi$ cannot be 2-uniform and SU(2)-invariant.
        \item The state $\ket\psi$ cannot be 2-uniform and U(d)-invariant with respect to the fundamental representation for $d_i=d$ for all $i\in\{1,\ldots,n\}$.
        \item The state $\ket\psi$ cannot be AME and SU(2)-invariant with respect to the $d_i$-dimensional irreducible representations if $n\geq 4$.
    \end{enumerate}
\end{corollary}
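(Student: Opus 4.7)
The plan is to derive all four parts as direct consequences of Lemma~\ref{lem:inv_mixed_nogo}, via two elementary reductions. First, $k$-uniformity forces every $k$-body marginal to have the product form $\frac{\ID}{d_k}\otimes\rho_l$ that the lemma rules out. Second, both $\SUT$ and $\operatorname{U}(d)$ contain a $\UO$ subgroup whose adjoint action on the chosen local representations remains non-trivial, so that the $\UO$-invariance obtained by restriction lands in the hypothesis of the lemma.

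Concretely, for part (a) I would note that a $2$-uniform state satisfies $\rho_{\{k,l\}}=\ID/(d_k d_l)=\frac{\ID}{d_k}\otimes\frac{\ID}{d_l}$ for every pair $\{k,l\}$. If the $\UO$ representation acts non-trivially at some party $k$ (otherwise the symmetry is vacuous), Lemma~\ref{lem:inv_mixed_nogo} provides an $l\neq k$ with $\rho_{\{k,l\}}\neq\frac{\ID}{d_k}\otimes\rho_l$, an immediate contradiction. For parts (b) and (c), I would exhibit an explicit $\UO$ subgroup with non-trivial adjoint action and reduce to (a): inside $\SUT$, the Cartan subgroup generated by $J_z$ acts non-trivially in the adjoint on any spin-$j$ irrep with $j\geq 1/2$; inside $\operatorname{U}(d)$, the subgroup $\phi\mapsto\diag(e^{i\phi},e^{-i\phi},1,\ldots,1)$ acts non-trivially on the fundamental representation whenever $d\geq 2$. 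Restricting the full invariance to these subgroups turns both statements into instances of part (a).

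For part (d), the arithmetic input is that an AME state on $n$ parties is $\lfloor n/2\rfloor$-uniform, so for $n\geq 4$ it is in particular $2$-uniform; combined with $\SUT$-invariance on the (non-trivial) irreducible representations $d_i=2j_i+1$, part (b) closes the argument.

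The only delicate bookkeeping — and hence what I would flag carefully at the start of each case — is the non-triviality clause in Lemma~\ref{lem:inv_mixed_nogo}: each reduction must identify at least one party whose local representation is acted on non-trivially by the chosen $\UO$ subgroup. This is automatic once one excludes the degenerate situation where all local representations are trivial (in which case $\SUT$- or $\operatorname{U}(d)$-invariance imposes no constraint at all and the statements become vacuous). No further obstacles are expected, since the hard analytic content — the incompatibility of the generator expansion \eqref{eq:U1_gen_eff} with vanishing single- and double-party correlation tensors — has already been carried out in the proof of Lemma~\ref{lem:inv_mixed_nogo}.
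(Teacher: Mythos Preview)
Your proposal is correct and follows essentially the same route as the paper: the corollary is stated as an immediate consequence of Lemma~\ref{lem:inv_mixed_nogo}, and your explicit reductions (2-uniformity forces the forbidden product form on every pair; $\SUT$ and $\operatorname{U}(d)$ restrict to a non-trivially acting $\UO$; AME with $n\geq 4$ implies $2$-uniform) are precisely the steps the paper leaves implicit. The only minor point worth noting is that Lemma~\ref{lem:inv_mixed_nogo} carries the hypothesis $n\geq 4$, which parts (a)--(c) do not state; this causes no issue since non-trivial $2$-uniform pure states require $n\geq 4$ anyway.
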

Certain results of the corollary have been discovered before. Part (c) has been shown in Refs.~\cite{Bernards_PHD, bernards2024multiparticle} and part (d) in Ref.~\cite{Hayden_2021} (for qubits also in Ref.~\cite{Mansuroglu_2023}). The techniques used in Ref.~\cite{Hayden_2021} can be applied to proof of \cref{lem:inv_mixed_nogo}, too. See \cref{app:err_corr_proof}.

Weaker properties of the marginals can, however, be compatible with unitary symmetries. In \cref{app:n_marginals}, we determine the maximal number of equal bipartitions where the marginals are maximally mixed assuming a $\SUT$ symmetry of the global state. These states can be used for the tensor networks as later shown in \cref{sec:examples}.

\subsection{No invariant Evenbly codes}
\change{In this subsection, we show the non-existence of $\SUT$-invariant holographic codes in the sense of Ref.~\cite{evenbly2017hyperinvariant,steinberg2024far}. Holographic codes feature an additional open logical leg $i_0$ for each tensor $A$ which should be reconstructable from the boundary. This leads to a new set of isometry conditions the tensors in a hyperinvariant tensor network have to fulfil \cite{steinberg2024far}. They include invariance under cyclic permutations of the physical legs $i_1,\dots,i_n$,
\begin{equation}
    A_{i_0;i_1\dots i_n}=A_{i_0;i_n i_1\dots i_{n-1}}\ ,
\end{equation}
and the tensor $A$ has to be an isometry from the logical leg plus one physical leg to the other legs}, or equivalently: the marginal of the corresponding quantum state has to be maximally mixed on the combined subsystem of the logical leg and a physical leg. Combining these two properties we get that
\begin{equation}
    T_{i_0;i_j|i_1\dots i_{j-1}i_{j+1}\dots i_n}
\end{equation}
has to be an isometry from $i_0;i_j$ to the complement for every $j\in[n]$. This translates to $\rho_{\{i_0,i_j\}}\propto\ID$ for all $j\in\{1,\dots,n\}$ where $\rho$ is the quantum state corresponding to $T_{i_0;i_1\dots i_n}$. However, this is not compatible with $\SUT$-invariance due to \cref{cor:nogo}~(b).
\begin{lemma}\label{lem:nocode}
    There is no $\SUT$-invariant holographic code.
\end{lemma}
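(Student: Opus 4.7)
The plan is to derive the result as a direct application of Lemma~\ref{lem:inv_mixed_nogo} to the logical leg of the hypothetical code tensor. First, I would unpack what a $\SUT$-invariant holographic code requires at the level of a single tensor $T_{i_0;i_1\ldots i_n}$: the cyclic symmetry in the physical indices combined with the isometry property from the logical leg plus one physical leg to the rest, already established in the paragraph preceding the statement, yields that the two-body marginal $\rho_{\{i_0,i_j\}}$ of the associated state $\ket{\psi_T}$ is maximally mixed on the joint $d_{i_0}d_{i_j}$-dimensional system, and thus factorizes as $\frac{\ID}{d_{i_0}}\otimes\frac{\ID}{d_{i_j}}$, for every $j\in\{1,\ldots,n\}$. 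The key structural assumption is that the representation $R^{(i_0)}$ of $\SUT$ on the logical Hilbert space must act nontrivially by conjugation; otherwise the logical index is a passive spectator and no covariant encoding takes place.

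Next, I would pick a $\UO$-subgroup of $\SUT$ whose generator acts nontrivially on $i_0$ (which exists precisely because $R^{(i_0)}$ is a nontrivial $\SUT$ representation, so at least one Cartan generator has nonzero adjoint action). Then Lemma~\ref{lem:inv_mixed_nogo} applies with $k=i_0$ and supplies at least one party $l\neq i_0$ for which $\rho_{\{i_0,l\}}\neq\frac{\ID}{d_{i_0}}\otimes\rho_l$. Since $l$ must be one of the physical legs $i_j$, and since the isometry condition has just produced exactly the forbidden product form $\frac{\ID}{d_{i_0}}\otimes\frac{\ID}{d_{i_j}}$ for every such $l$, we arrive at the desired contradiction, and no $\SUT$-invariant holographic code can exist.

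The main subtlety, and essentially the only obstacle, is the correct interpretation of the statement. One must insist that the logical leg carries a nontrivial $\SUT$ representation; if it did not, Lemma~\ref{lem:inv_mixed_nogo} would not engage on $k=i_0$ and the invariance condition would impose nothing on the encoded information. A minor bookkeeping point is that, although the surrounding discussion points to Corollary~\ref{cor:nogo}(b), the full $2$-uniformity hypothesis of that corollary is not actually available here: only the marginals involving the logical leg are forced to be maximally mixed, while the marginals between two physical legs are unconstrained. Invoking Lemma~\ref{lem:inv_mixed_nogo} directly on the single party $k=i_0$ is therefore the cleanest route, and it is exactly strong enough to close the argument.
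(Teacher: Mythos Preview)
Your proposal is correct and follows essentially the same single-tensor argument as the paper: the cyclic symmetry plus the isometry condition force $\rho_{\{i_0,i_j\}}\propto\ID$ for every physical leg $i_j$, which then contradicts the invariance via Lemma~\ref{lem:inv_mixed_nogo}. Your observation that Corollary~\ref{cor:nogo}(b) is nominally too strong a reference (since only the marginals containing the logical leg are controlled, not all two-body marginals) and that Lemma~\ref{lem:inv_mixed_nogo} applied at $k=i_0$ is the precise tool is a valid refinement; the paper cites the corollary but the underlying content it uses is exactly the lemma, so the arguments coincide in substance.
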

This result can also be obtained by considering the network as a whole. As for a single tensor, error-correcting codes can be excluded in the presence of $\SUT$-invariance \cite{Hayden_2021}. \change{Note that Lemma \ref{lem:nocode} does not prohibit a collection of HITs satisfying certain isometries, but that the collection cannot be indexed in a gauge covariant fashion. In particular, this means that holographic maps from bulk to boundary might be realizable for bulk degrees of freedom that transform trivially under the gauge group. Codes from random tensors \cite{Hayden_2016, Li_2018} are not excluded by Lemma \ref{lem:nocode}, since they only require approximate isometry relations. Invariant tensors are generically close to perfect for large local dimensions \cite{Li_2017} and models related to LQG that use random tensors have been considered in \cite{Chirco:2017vhs, Chirco:2019dlx}.}

\section{Existence of SU(2)-invariant Hyperinvariant tensor network states}
\label{sec:examples}

Considering the restrictions proven in the previous section, we present examples of allowed structures that combine hyperinvariance and $\SUT$ invariance. Given a hyperbolic $(p,q)$ tesselation of the Poincaré disc with $(p-2)(q-2) > 4$, we can define several classes of examples of HITs. After discussing different possibilities to distribute Bell pairs in a way symmetric under cyclic permutations, we show examples where entanglement is distributed among multiple parties.

The simplest examples that trivially satisfy $\SUT$ invariance are maximally entangled (Bell) pairs between pairs of indices. In order to also be partial isometries, i.e. satisfy Eqs.~\eqref{eq:isometries_HIT} or \eqref{eq:isometries_reducible}, respectively, as well as cyclic permutation invariance, additional structure has to be applied. We start with a trivial example.
\begin{example}[Star Shaped Distribution]
    \label{ex:star}
    Consider an even valence $q \in 2 \mathds{N}$ and define $A$ as a tensor product of pairs of opposite indices, for instance for $q=8$
    \begin{align}
        A = \vcenter{\hbox{\starA}}.
    \end{align}
    The tensor $B$ is an arbitrary intertwiner on the reducible space (i.e. any permutation of the $k$ subspaces).
\end{example}
The \cref{ex:star} \change{is a \emph{planar maximally entangled} state} \cite{Doroudiani2020,wang2021planar} featuring maximally mixed marginals for any subset of adjacent parties (with respect to a given topology). This property allows for the application of the same techniques used in
Ref.~\cite{Pastawski_2015} to prove the Ryu-Takayanagi formula \cref{eq:rt}.
\begin{example}[Left/Right Distribution]
    \label{ex:LR}
    For arbitrary valence $q$, but fixed number of irreps $k=2$, we define $A$ as connecting neighboring parties with Bell pairs, for instance for $q=3$
    \begin{align}
        A = \vcenter{\hbox{\LRA}}, \qquad B = \vcenter{\hbox{\LRBI}}
        \label{eq:LR_state3}
    \end{align}
    and for $q=4$,
    \begin{align}
        A = \vcenter{\hbox{\LRAII}}, \qquad B = \vcenter{\hbox{\LRBI}}.
        \label{eq:LR_state4}
    \end{align}
    If $B$ is chosen as the swap operator between subspace 1 and subspace 2, one can straightforwardly show Eqs.~\eqref{eq:isometries_reducible}.
\end{example}
One can straightforwardly show that the isometry relations from Eq.~\eqref{eq:isometries_reducible} are satisfied. The 1-isometry relations are satisfied by construction. Inserting a $q$-valent tensor of the form of Example \ref{ex:LR} into the 2-isometry relation yields an equation of the form
\begin{align}
    \vcenter{\hbox{\isometryExample}} \propto \mathds{1},
\end{align}
with a proportionality constant that is determined by normalization.
\begin{example}[$l$ moves left/right]
    \label{ex:LR_general}
    For arbitrary $q$ and $2 \leq k \leq q - 1$, it is possible to distribute Bell pairs in a rotational invariant fashion by connecting subspaces with the corresponding subspace of the $l^\text{th}$ neighbor. If $k$ is large enough, this can be done simultaneously within the same tensor with $l \in \{1, ..., \lfloor\frac{q}{2} \rfloor\}$. Here are examples for $q=5$ and $k=2$
    \begin{align}
        A = \vcenter{\hbox{\LRAIII}}, \qquad B = \vcenter{\hbox{\LRBIII}} 
    \end{align}
    and for $k=4$
    \begin{align}
        A = \vcenter{\hbox{\LRAIV}}, \qquad B = \vcenter{\hbox{\LRBIV}}.
    \end{align}
    If neighboring parties share a Bell pair, we need to choose $B$ such that it swaps the two subspaces involved. This is a necessary condition to fulfil Eqs.~\eqref{eq:isometries_reducible}.
\end{example}
In fact, Examples \ref{ex:star} and \ref{ex:LR} are special cases of Example \ref{ex:LR_general}. Another construction allows us to take the tensor product of HITs.
\begin{example}[Tensor Products]
    Let $(A, B)$ and $(A', B')$ be HITs. The tensor product $(A \otimes A', B \otimes B')$ is also a HIT. One example for $q=4$ and $k=3$ is
    \begin{align}
        A = \vcenter{\hbox{\LRAV}}, \qquad B = \vcenter{\hbox{\LRBV}}.
    \end{align}
\end{example}
Although the examples listed here ultimately can be described by a bipartite structure on the quantum gravitational level, they show multipartite entanglement by the coarse graining of informational parties, see Appendix~\ref{app:GM} for a discussion on the geometric measure.

Linear combinations of our examples remain SU(2)-invariant, however they are no longer isometries, in general. This does not exclude the existence of HITs being superpositions of Bell pairs \change{(like multi-partite entangled states discussed in \cite{Akers_2020, Mori_2026, li2025tripartitehaarrandomstate})}, but we have not found any such example.

\section{Hyperbolic Geometry from Loop Quantum Gravity}
\label{sec:geo}
The geodesic length in AdS/CFT models is often calculated as the graph length of the tensor network state, i.e. the number of intersections between geodesic and graph. Intuitively, this corresponds to treating every plaquette that surrounds a vertex as an equal quantum of spacetime, making its boundary a measure for length. From loop quantum gravity, we have an alternative way of deriving the length of a curve as an expectation value of an operator from fundamental principles of geometry. In this section, we show that for a specific class of HITs, the expectation value of the length operator indeed coincides with the graph length, which enables the use of prior results on hyperinvariant tensor networks \cite{evenbly2017hyperinvariant, Steinberg_2023, Bistro__2025}. Additionally, we observe that the area of a surface $\mathcal{S}$ on the Poincaré disc scales with the number of vertices within $\mathcal{S}$. The loop quantum gravitational area operator enables us to also compute the proportionality factor for a given tensor~$A$.

\subsection{Geodesic Length}
\label{sec:geodesic}
Using the length operator from Eq.~\eqref{eq:length_op}, it is straightforward to calculate the minimal curve length connecting two points at the boundary of a state like the ones in Eq.~\eqref{eq:LR_state3} and \eqref{eq:LR_state4} by evaluating the expectation value as an average over the decomposition into spin network states, which are eigenstates of the length operator. One can always decompose a state into spin network states, since they form a basis of the SU(2)-invariant states \cite{Rovelli_1995b, Thiemann:2007pyv, Rumer1932}. But even without a spin network decomposition, we can calculate \change{the expectation value of the length operator at an intersection point $x \in \gamma \cap \Gamma$}
\begin{align}
    &\bra{ A } L_\gamma \ket{ A } = \sum_j \sqrt{j(j+1)} \braket{ A | P_x^j | A } \nonumber \\
    &= \sum_j \sqrt{j(j+1)} \sum_{\ket{\phi} \in \mathcal{P}(P_x^j)} \lvert \braket{ A | \phi } \rvert^2 =: \sum_j \ell_j,
    \label{eq:length_contribution_A}
\end{align}
with $\mathcal{P}(P_x^j)$ denoting the +1 eigenspace of the projector $P_x^j$. The state $\ket{A}$ is a placeholder for any single vertex tensor consisting of a number $m$ of Bell pairs. With the notation $\ket{A}$, we emphasize that the tensor $A$ (together with holonomies contracted to the open indices of $A$) defines a state on which geometrical operators act. The constants $\ell_j$ are only dependent on the spin $j$ and the single vertex tensor $A$. From the locality of the projectors $P_x^j$ and the product structure of the Bell pairs, we can calculate the expectation value of $L_\gamma$ on multiple layers of HITs constructing the whole holographic state $\ket{\psi}$. In this case, we get similar terms as above, but have to sum over all intersections points $x \in \Gamma \cap \gamma$ of the graph $\Gamma$ underlying the holographic state and the curve $\gamma$. However, since the tensor network only consists of Bell pairs, all holonomies cancel and the contraction reduces to the level of single tensors again. This yields
\begin{align}
    \bra{ \psi } L_\gamma \ket{ \psi } &= \sum_{x\in \Gamma\cap \gamma} \sum_j \sqrt{j(j+1)} \lvert \braket{ \psi | P_x^j | \psi }\rvert^2 \nonumber \\
    &= L_\Gamma(\gamma) \sum_j \ell_j,
    \label{eq:graph_length}
\end{align}
with $L_\Gamma(\gamma) := \abs{\Gamma\cap\gamma}$ being the graph length of $\gamma$. Since the factor $\sum_j \ell_j$ is fixed by the single HIT, Eq.~\eqref{eq:graph_length} proves the following lemma.
\begin{lemma}
    Let $\ket{\psi}$ be a holographic tensor network state built from a HIT $\ket{A}$ that is a tensor product of Bell pairs and let $\gamma$ be a curve on the Poincaré disc. The quantum geometrical length and the graph length of $\gamma$ coincide 
    \begin{align}
        \braket{L_\gamma} = c_A L_\Gamma(\gamma),
        \label{eq:length_linear}
    \end{align}
    up to a constant $c_A = \sum_j \ell_j$ that is determined by the HIT $A$.
\end{lemma}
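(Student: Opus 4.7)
The plan is to reduce the full-network expectation value of $L_\gamma$ to a single-HIT calculation, exploiting two facts: the length operator in Eq.~\eqref{eq:length_op} is a sum of local projectors, and $\ket{A}$ being a tensor product of Bell pairs forces the reduced state on every edge of $\Gamma$ to have the same spin content. Together they give a per-edge contribution independent of the rest of the network, which then sums to $c_A \abs{\Gamma \cap \gamma}$.

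First I would observe that the integrand in Eq.~\eqref{eq:length_op} is supported only at the (finitely many, generically transverse) intersection points $x \in \Gamma \cap \gamma$, so that $L_\gamma = \sum_{x \in \Gamma \cap \gamma} \sum_j \sqrt{j(j+1)}\, P_x^j$. By linearity of the expectation, $\braket{\psi | L_\gamma | \psi} = \sum_{x \in \Gamma \cap \gamma} \sum_j \sqrt{j(j+1)}\, \braket{\psi | P_x^j | \psi}$.

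The core step, which I expect to be the main obstacle, is to show that $\braket{\psi | P_x^j | \psi}$ is independent of $x$ and equals the single-HIT quantity $\braket{A | P_y^j | A}$ computed in Eq.~\eqref{eq:length_contribution_A} for any edge $y$ of $A$. The argument is a telescoping one: since $A$ is a product of Bell pairs, gluing two neighboring HITs along a common vertex joins two Bell pairs through the edge tensor $B$, and because $B$ in all our examples acts as a permutation of irreducible subspaces (as enforced by Eqs.~\eqref{eq:isometries_reducible}), this again produces a Bell pair, now connecting the two unfixed partners. Iterating through the full network, every edge of $\Gamma$ carries exactly one Bell pair shared with some other edge (possibly far away at the boundary). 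Since $P_x^j$ acts only on the holonomy at $x$, its expectation depends only on the spin content of a single Bell pair, which matches the single-HIT expectation. Holonomies on non-probed edges cancel pairwise against their duals in $\bra{\psi}$ by unitarity, so no spin network basis decomposition is required.

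Putting everything together, $\braket{\psi | L_\gamma | \psi} = \sum_{x \in \Gamma \cap \gamma} \sum_j \ell_j = L_\Gamma(\gamma) \sum_j \ell_j = c_A L_\Gamma(\gamma)$, which is the claim. The delicate point is the telescoping in the second step: it relies crucially on $B$ being a permutation between irreducible subspaces on the edges, a feature shared by all examples constructed in Section~\ref{sec:examples} but requiring separate verification for any other allowed $B$.
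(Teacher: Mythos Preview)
Your proposal is correct and follows essentially the same approach as the paper: both exploit the locality of the projectors $P_x^j$ together with the Bell-pair product structure of $A$ to reduce the full-network expectation value to a per-edge single-HIT contribution, with holonomies on unprobed edges canceling by unitarity. Your account is in fact more explicit than the paper's about the telescoping mechanism and the role of the edge tensor $B$ as a subspace permutation---a point the paper summarizes simply as ``the tensor network only consists of Bell pairs, all holonomies cancel and the contraction reduces to the level of single tensors again.''
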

Let us consider the 3-valent HIT from Example \ref{ex:LR}. The only relevant spin subspaces are $j \in \{0,1\}$, since the holonomies are 4-dimensional (that is, they act to a two-qubit space). Since the contribution $\ell_0$ vanishes, the proportionality constant in this case is particularly simple, $c_A = \ell_1$. The spin 1 projector is a simple symmetrization of two spin $\frac{1}{2}$ indices and has the form $P_x^1 = \frac{1}{3} \ket{\vcenter{\hbox{\spinone}}}\bra{\vcenter{\hbox{\spinone}}}$, with a factor $\frac{1}{3}$ ensuring normalization and idempotence. If we plug in this example into Eq.~\eqref{eq:length_contribution_A}, we get
\begin{align}
    \bra{ \vcenter{\hbox{\threeVAplain{0.15}}}} L_\gamma \ket{ \vcenter{\hbox{\threeVAplain{0.15}}} } &= \sqrt{2} \lvert \braket{ \vcenter{\hbox{\threeVAplain{0.15}}} | \vcenter{\hbox{\threeVAspin{0.15}}} \hspace{-1ex} } \rvert^2 = \frac{\sqrt{2}}{16} \lvert \vcenter{\hbox{\traceid{}}}\rvert^2 \nonumber \\
    &= \frac{9\sqrt{2}}{16} = \ell_1,
\end{align}
where $\sqrt{j(j+1)} = \sqrt{2}$ and we assume the state to be normalized $|\braket{\vcenter{\hbox{\threeVAplain{0.15}}}| \vcenter{\hbox{\threeVAplain{0.15}}}}|^2 = 1$. Constructing higher spin projectors $P_x^j$ is straightforward, but quickly becomes tedious. We discuss a general procedure to derive $P_x^j$ in Appendix \ref{app:higher_spin}.

Note that the above discussion is specific to non-crossing Bell pairs, which form a basis of the SU(2)-invariant space. Tensors built from a superposition of such Bell pair products are more complicated, as the combinations of Bell pair contractions scales exponential in the number of vertex tensors $A$. For instance already a superposition like $A = (\ket{ \vcenter{\hbox{\threeVAplain{0.15}}}} + \ket{ \vcenter{\hbox{\threeVAplainII{0.15}}} })/\sqrt{2}$ requires the evaluation of $2^N$ (possibly different) matrix elements when $N$ is the number of vertices on the Poincaré disc. As we have not found any indication for the existence of such HITs, we have to leave a possible deviation of the geodesic length from the graph length as an open question.

In addition to the geodesic length, we are also able to calculate the variance of the geodesic length $L_\gamma$ evaluating the expectation value $\braket{L_\gamma^2}$. The squared operator $L_\gamma^2$ decomposes into products of the local operators $\braket{L_{e} L_{e'}}$ acting on edges $e$ and $e'$ that intersect $\gamma$. This product, in turn, decomposes to the joint application of projectors onto spins $j$ and $k$, which is a simple calculation in the case of coincidence of the geodesic and the entanglement wedge $\mathcal{E}_\mathcal{A}$ corresponding to a boundary region $\mathcal{A}$. Leaving out the eigenvalues $\sqrt{j(j+1)}\sqrt{k(k+1)}$, the relevant terms in $\braket{L_\gamma^2}$, acting on $e \neq e'$, are of the form
\begin{align}
    &\Tr \left( \vcenter{\hbox{\lengthvariance}} \right) = \frac{\ell_j \ell_k}{\sqrt{j(j+1)}\sqrt{k(k+1)}}.
    \label{eq:length_correlation}
\end{align}
In the case $e = e'$, we get a contribution $\sqrt{j(j+1)} \ell_j$ using the projective property of the symmetrization. With this, the second moment of the length operator becomes
\begin{align}
    &\braket{L_\gamma^2} = L_\Gamma(\gamma) \sum_j \sqrt{j(j+1)} \ell_j + L_\Gamma(\gamma) (L_\Gamma(\gamma) - 1) c_A^2 \nonumber \\
    &= \braket{L_\gamma}^2 + L_\Gamma(\gamma) \left( \sum_j \ell_j \left( \sqrt{j(j+1)} - c_A \right) \right)
\end{align}
using $c_A = \sum_{j} \ell_j$ from Eq.~\eqref{eq:length_linear}. Again, for our example of a (7,3) tiling, we can insert $c_A = \ell_1 = \frac{9}{16} \sqrt{2}$ and $j=1$, which yields a variance ${\rm Var}(L_\gamma) = \braket{L_\gamma^2} - \braket{L_\gamma}^2 = \frac{63}{128} L_\Gamma(\gamma)$. \change{Keeping track of the Planck-scale spectrum, $\braket{L} \sim G_N \sum_j j$, one can calculate $\mathrm{Var}(L) = O(G_N \braket{L})$, which corresponds to $\mathrm{Var}(L) = O(G_N^2)$ at fixed spins. In standard holographic expectations for fluctuations around a classical geometry \cite{Faulkner_2013}, one fixes the macroscopic length, which corresponds to the semiclassical limit $G_N \to 0$ with $j \to \infty$. In this case, one finds $\mathrm{Var}(L) = O(G_N)$.}

It has been shown in Refs.~\cite{evenbly2017hyperinvariant, Steinberg_2023} that hyperinvariant tensor networks do not always fulfil the RT formula (for instance in a (5,4) tiling). There are cases where the causal wedge is slightly larger than the entanglement wedge and thus the tensors separated by the geodesic do not trace to identity. In these cases, the entanglement wedges of the parties $\mathcal{A}$ and $\mathcal{A}^\complement$ are separated by a strip between the endpoints of $\mathcal{A}$. The length operator acts on holonomies within the excluded strip as opposed to Eq.~\eqref{eq:length_correlation}, where it acts at the boundary of the entanglement wedge. As a consequence, the trace does not just yield symmetrized identities, but there are surviving tensors $A, B$ that need to be contracted manually, yielding a slightly deviating result, in general. This happens for instance in certain cuts of a (5,4) tiling. The same calculation as before can be carried out, with the only difference that the trace in Eq.~\eqref{eq:length_correlation} does not simplify to a product of Wilson loops, but rather to a tensor contraction of the following form:
\begin{align}
    \Tr \left( \vcenter{\hbox{\lengthvariancestrip}} \right)
\end{align}
with a strip $\sigma_\mathcal{A}$ defined by the region between the isometric boundaries $\gamma_\mathcal{A}$ and $\gamma_{\mathcal{A}^\complement}$ determined by a greedy algorithm starting from a curve along $\mathcal{A}$ or $\mathcal{A}^\complement$, respectively, see Fig.~\ref{fig:strip}.

\begin{figure}
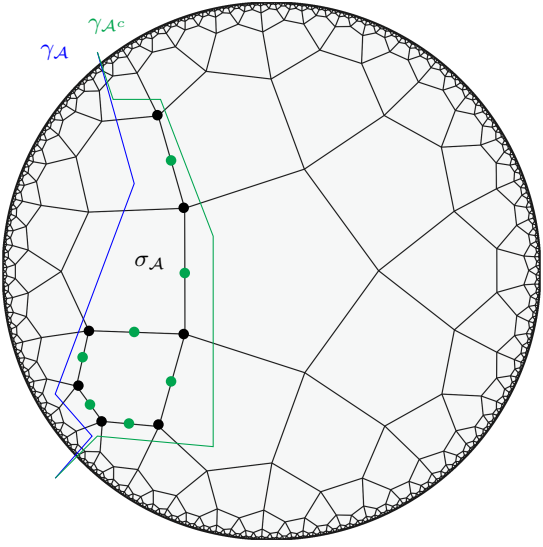

    \centering
    $\vcenter{\hbox{\SNSdiscStrip}}$
    \caption{\textbf{Equivalence defect of entanglement and causal cone.} The strip $\sigma_\mathcal{A}$ is composed of the $A$ tensors represented by the black dots and $B$ tensors in green, for instance as in Eq.~\ref{eq:LR_state4}. }
    \label{fig:strip}
\end{figure}

\subsection{Wedge Area}
The fact that every tensor has an equal area contribution is more direct and can be seen from the definition of the total area of a surface $\mathcal{S}$, $S_\mathcal{S} = \sum_{v \in \mathcal{S}} S_v$. As we consider a state that is built from copies of the same tensors $A$ and $B$, the total area of $\mathcal{S}$ is linear in the number of vertices $S_\mathcal{S} = \abs{S} S_{v_0}$ for any $v_0 \in \mathcal{S}$. It turns out that the direct calculation of $S_{v_0}$ is not instructive, so we are left with a decomposition of the HIT $A$ into spin network states. Three-valent intertwiners with spin labels $(j,k,l)$ are eigenstates of the area operator to the eigenvalue 
\begin{align}
    s^2_{jkl} := &\frac{9}{4}\left[ 2( \Delta_j \Delta_k + \Delta_j \Delta_l + \Delta_l \Delta_k) - (\Delta_j^2 + \Delta_k^2 + \Delta_l^2) \right] \nonumber \\
    &- \frac{1}{2} (\Delta_j + \Delta_k + \Delta_l)
    \label{eq:eigenarea}
\end{align}
with $\Delta_j = -j(j+1)$. Furthermore, two-valent vertices with spin $j$ and non-parallel tangent vectors are also eigenstates and contribute with an eigenvalue $j(j+1)$ as shown in Ref.~\cite{Thiemann_1998_QSD}. With the same argument as in the length calculation, the expectation value of $S_{v_0}$ can then be calculated as a local inner product using the fact that the holonomies cancel and the Bell pairs at the boundary are either contracted to loops that cancel with the normalization factor, or connect to the tensor at $v_0$. On a generic tensor $A = \sum_{j_1, ..., j_n} c_{j_1 ... j_n} A_{j_1 ... j_n}$ at $v_0$, the expectation value of the area operator reads
\begin{align}
    \braket{S_{v_0}}_A = \sum_{j_1, ..., j_n} \abs{c_{j_1 ... j_n}}^2 s_{j_1 ... j_n}
\end{align}
Let us consider the HIT for the (7,3) tiling, where $A = \vcenter{\hbox{\threeVAplain{0.15}}}$ (cf. Example \ref{ex:LR}), as an example to demonstrate the area calculation. Coupling six qubits in an $\SUT$-invariant way gives rise to a five-dimensional invariant subspace, which can be constructed by Clebsch-Gordan coupling,

\begin{align}
    \frac{1}{2}^{\otimes 6} \cong (0 \oplus 1)^{\otimes 3} \cong 0^{\oplus 5} \oplus 1^{\oplus 9} \oplus 2^{\oplus 5} \oplus 3,
\end{align}
where we chose a coupling scheme that first couples the two indices pointing into the same direction without loss of generality. In fact, such subspaces including precursors of Eqs.~(\ref{eq:73_dec}, \ref{eq:73_dec-zwei}) below have already been studied in 1932 by G. Rumer \cite{Rumer1932}. The basis states read
\begin{align}
    A_{000} &= \frac{1}{n_{000}} \, \vcenter{\hbox{\threeVAbasisI}} \quad A_{011} = \frac{1}{n_{011}} \, \vcenter{\hbox{\threeVAbasisII}} \nonumber \\[5mm]
    A_{101} &= \frac{1}{n_{101}} \,\vcenter{\hbox{\threeVAbasisIII}} \quad A_{110} = \frac{1}{n_{110}} \,\vcenter{\hbox{\threeVAbasisIV}} \nonumber \\[5mm]
    A_{111} &= \frac{1}{n_{111}} \,\vcenter{\hbox{\threeVAbasisV}}\; ,
    \label{eq:73_dec}
\end{align}
with the white square indicating symmetrization of the indices, straight lines depicting Kronecker deltas and arcs depicting the Levi-Civita symbol $i \epsilon_{ab}$ in two dimensions multiplied by an imaginary unit, such that 
\begin{align}
    \bigcirc = -2 \qquad \text{and} \qquad -\vcenter{\hbox{\binorI}} = |\;| + \vcenter{\hbox{\binorII}}.
 \label{eq:73_dec-zwei}
\end{align}
The $n_{j_1 j_2 j_3}$ in Eq.~\eqref{eq:73_dec} are normalization factors that can be calculated using the Hilbert-Schmidt inner product. A straightforward calculation yields $n_{000} = \sqrt{8}, n_{011} = n_{101} = n_{110} = \sqrt{6}$ and $n_{111} = \sqrt{3}$, as well as $\norm{A} = \sqrt{8}$. Similarly, we can calculate the overlap between the (7,3) HIT $A = \sum_{j_1 j_2 j_3} c_{j_1 j_2 j_3} A_{j_1 j_2 j_3}$ and the basis tensors $A_{j_1 j_2 j_3}$ yielding the coefficients $c_{j_1 j_2 j_3}$, for instance
\begin{align}
    c_{000} = \frac{\braket{A | A_{000}}}{\norm{A}} = \frac{1}{8} \vcenter{\hbox{\threeVAcoeffI}} = - \frac{1}{4},
    \label{eq:coeffs73}
\end{align}
Analogously, we get $c_{011} = c_{101} = c_{110} = \frac{\sqrt{3}}{4}$ and $c_{111} = - \frac{\sqrt{6}}{4}$. To calculate the area of the triangle dual to the vertex $A$, the following eigenvalues are relevant
\begin{align}
    s_{000} = 0 \qquad s_{110} = s_{101} = s_{011} = \sqrt{2} \qquad s_{111} = \sqrt{30}.
\end{align}
With this, the area contribution of a single vertex reads $\braket{S_{v_0}} = \frac{9}{16} \sqrt{2} + \frac{3}{8} \sqrt{30}$. Note that the formula for equilateral triangles is not fulfilled on the level of expectation values $\braket{S}$ and $\braket{L^2}$. In that sense, the states that we consider are not semi-classical, but truly quantum states.

\subsection{Negative Curvature}
In order to make a statement about the curvature of a region, we consider the sum of angles of an $n$-gon, which is equal to $(n-2)\pi$ in the absence of curvature. Similar to the area of a polygon (see Eq.~\eqref{eq:area_op}), we can calculate the loop quantum gravitational version of the angle between two tangent vectors $\vec X(e_1), \vec X(e_2)$ via the formula
\begin{align}
    \cos(\theta) = \frac{\braket{ \vec X(e_1)[.] | \vec X(e_2)[.] }}{\norm{\vec X(e_1)[.]} \norm{\vec X(e_2)[.]}},
\end{align}
with the tangent vectors from Eq.~\eqref{eq:grasping}. Using the Leibniz rule for tangent vectors, we can calculate the angle of the triangles defined by the 3-valent tensor $A$ from Eq.~\eqref{eq:LR_state3}. Higher valent tensors can be treated analogously. The action of $\vec X$ on $A$ reads 
\begin{align}
    \vec X(e_1)(A) = \vcenter{\hbox{\graspedA{0}}} + \vcenter{\hbox{\graspedAII{0}}} \label{eq:Xe1A}, \\
    \vec X(e_2)(A) = \vcenter{\hbox{\graspedA{-120}}} + \vcenter{\hbox{\graspedAII{-120}}},
    \label{eq:Xe2A} \\
    \vec X(e_3)(A) = \vcenter{\hbox{\graspedA{120}}} + \vcenter{\hbox{\graspedAII{120}}}.
    \label{eq:Xe3A}
\end{align}
Since the tensor is rotation invariant, the three sites as well as the three angles need to be equal. It thus suffices to do the calculation with one pair of tangent vectors. A straightforward calculation yields
\begin{align}
    \braket{\vec X(e_1)(A) | \vec X(e_2)(A)} = - \frac{1}{2} \norm{\vec X(e_1)(A)} \norm{\vec X(e_2)(A)},
    \label{eq:angle}
\end{align}
which is a consequence of the orthogonality of all but the second term from Eq.~\eqref{eq:Xe1A} and the first term from Eq.~\eqref{eq:Xe2A}, which are equal up to a minus sign. The angle contributing to the sum of angles in a triangle is related to the angle between the tangent vectors via $\alpha = \pi - \theta$, as the normal vectors can be virtually parallel-transported 
\begin{align}
    \vcenter{\hbox{\dualtriangle}}
\end{align}
Together with Eq.~\eqref{eq:angle}, this yields $\alpha = \arccos\left( \frac{1}{2} \right) = \frac{\pi}{3}$, which indicates no curvature on the level of a single vertex. However, arranging ``flat'' triangles in a (7,3) tiling creates negatively curved surfaces. For instance, the dodecagon built from one layer of a (7,3) tiling has a sum-of-angles deficit of $\pi$. This can be seen using a simple geometrical argument that identifies the angles of the dodecagon as multiples of $\alpha$, see Fig.~\ref{fig:curvature}.
\begin{figure}
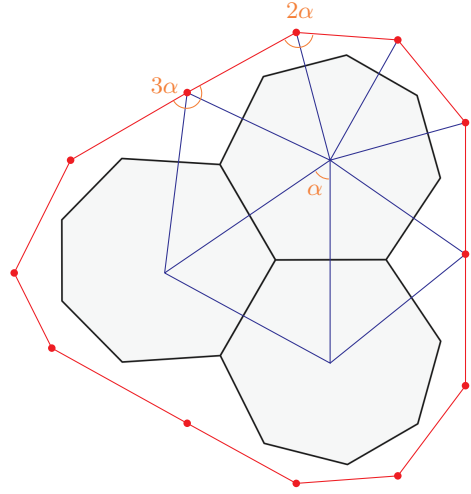

    \centering
    $\vcenter{\hbox{\anglesum}}$
    \caption{\textbf{Hyperbolic arrangement of ``flat'' triangles in a single layer.} The triangles from which the dodecagon is built and which are defined by the tensors $A$ are all identical. Although, since we draw the dodecagon in a flat projection, the triangles appear different in size.}
    \label{fig:curvature}
\end{figure}
The sum of angles of the dodecagon finally is
\begin{align}
    9\cdot(2\alpha) + 3 \cdot (3\alpha) = 9 \pi < 10 \pi = (n-2)\pi,
\end{align}
indicating negative curvature.

\subsection{Boundary Correlations}
The HIT constructions discussed above still have a free parameter $k$ that determines the number of reducible SU(2) components per quantum informational party. Viewing the boundary state as a discretization of a CFT state, there are, however, expectations towards the decay of correlations as a function of the relative angle on the boundary of the Poincaré disc. Consider two local observables $O_1$ and $O_2$, which have support on single parties on the boundary.

On products of Bell pairs as our examples in section \ref{sec:examples} suggest, the two-point correlator is easy to calculate in graphical notation. For $k=1$, the two observables can either share a Bell pair or not. This means, that they are either uncorrelated, $\braket{O_1 O_2} = \braket{O_1} \braket{O_2}$, or fully correlated, meaning
\begin{align}
    \braket{\psi | O_1 O_2 | \psi}= \vcenter{\hbox{\correlations}}.
\end{align}
This singular correlation structure is not expected to appear in a continuum limit. Rather, we would like neighboring sites to share more Bell pairs than far away sites. We can achieve this by raising $k$, the number of irreducible components per party. For $k=2$, for instance, we can have uncorrelated sites, but also partially correlated sites:
\begin{align}
    \braket{\psi | O_1 O_2 | \psi}= \vcenter{\hbox{\correlationsII}}.
\end{align}
In order to satisfy non-singular correlation structures decaying with boundary distance, we need to have larger $k$. Every party has to share a number of Bell pairs with every other party decaying in distance as $e^{-j/\xi}$, with a correlation length $\xi$ \change{for an exponential decay of correlations, or as $j^{-\alpha}$ for algebraic decay.} The number of Bell pairs per party is given by summing the partial sum of the geometric series
\begin{align}
    k = 2m \sum_{j=1}^{n/2} e^{-j/\xi} = 2m \frac{1 - e^{-n/(2\xi)}}{e^{1/\xi} - 1}
    \label{eq:partial_sum}
\end{align}
\change{assuming exponential decay of correlations,} with $m$ being the maximal number of shared Bell pairs and $n$ being the number of parties at the boundary. The factor 2 is necessary, since the sum is only capturing correlations from one point at the boundary up to the antipode. Requiring all correlations to be described by at least one shared Bell pair implies $m e^{-j/\xi} \geq 1$, which bounds $j$ and thus $m \leq e^{n/(2\xi)}$. Finally, this yields $k = 2 \frac{e^{n/(2\xi)} - 1}{e^{1/\xi} - 1}$. \change{Assuming algebraic decay of correlations, the partial sums in Eq.~\eqref{eq:partial_sum} do not admit a closed form, but a rough estimate yields $k = \mathcal{O}(n^\alpha)$. On the other hand, a system size dependent correlation length $\xi = \Omega(n)$ also implies a linear growth in fine structure, $k = \Omega(n)$.}

In order to distribute Bell pairs to different parties on the boundaries, also the valence $q$ will be lower bounded by $k$, which leaves us with a \change{necessary scaling $q = \Omega(k)$}.

\section{Conclusion}
Emergent spacetime can be loop quantum gravitational. Quantum states with the entanglement structure showing aspects of AdS/CFT duality are contained in its Hilbert space. While SU(2)-invariance prohibits tensors to have entanglement structures going beyond maximally mixed 2-marginals of one specific party with any other party, weaker isometries can be implemented using a tensor product of Bell pairs shared among potentially many parties. Those hyperinvariant tensors do not only capture holographic properties, but also yield direct access to geometric properties, such as curve length and surface area derived from LQG. We show that, on holographic states, curve lengths and the graph length frequently used in the literature coincide up to a constant factor. Similarly, the area of a surface is proven to grow linear with the number of vertices inside. These results make use of the simple structure of HITs that we found to be tensor products of shared Bell pairs. The question remains whether there are HITs that have a more sophisticated, potentially multipartite entanglement structure, and whether similar geometric relations hold for those states as well.

The nonexistence of holographic codes within the LQG framework is another insight that poses future challenges for encoding matter degrees of freedom in the bulk theory. Lifting the condition of separable 2-marginals containing the bulk index, \change{or considering quasi-crystalline tilings \cite{Boyle_2020, Boyle_2025},} provides a way around our no-go theorem and prior no-go results for covariant codes. What needs to be clarified is whether those structures allow for at least partial reconstruction of bulk information in order to call it an emergent spacetime.

Recent work also discussed the complementary recovery of type II factors in an inductive limit of perfect tensor networks \cite{chemissany2025infinitetensornetworkscomplementary} in view of the necessity of type III factors in von Neumann algebras that describe quantum field theories. We showed that enforcing SU(2)-invariance (ultimately originating from Lorentz symmetry) forbids those structures previously criticized. However, it is not clear whether a tensor network built from HITs automatically implies type III factors in an inductive limit. Viewing the LQG states discussed in this paper in an algebraic perspective remains an open question left for future work.

Finally, the AdS/CFT correspondence is a convenient but ultimately unphysical toy model for the study of emergent spacetimes. LQG provides direct relations between entanglement and geometry independently of the background manifold \cite{Bose_2017,sahlmann2025bellstatesfermionsloop, Livine_2018}. Thus, leaving the Poincaré disc but keeping isometric structures could open possibilities to similar emergent phenomena and entanglement-geometry correspondence.

\begin{acknowledgments}
The authors thank Felix Huber, Matthias Kleinmann, Andreas Leitherer, Seth Lloyd, Erickson Tjoa, Freek Witteveen, and Paolo Zanardi for helpful discussions. OG, NS, and RM thank the
Centro de Ciencias de Benasque Pedro Pascual for hosting them during the workshop ``Quantum Information'' in June/July 2025. RM and NS acknowledge support from the Austrian Science Fund (FWF) via the Cluster of Excellence ``quantA'' (Grant No.~\href{https://doi.org/10.55776/COE1}{10.55776/COE1}), the Special Research Programme ``BeyondC: Quantum Information Systems Beyond Classical Capabilities'' (Grant No.~\href{https://doi.org/10.55776/F71}{10.55776/F71}), and the Standalone Project ``Entanglement Order Parameters'' 
(Grant No.~\href{https://doi.org/10.55776/P36305}{10.55776/P36305}), by the European Union -- NextGenerationEU, and by the European Union’s Horizon 2020 research and innovation programme through Grant No.~863476 (ERC-CoG SEQUAM). FO and OG acknowledge support from the Deutsche Forschungsgemeinschaft (DFG, German Research Foundation, project number 563437167), the 
Sino-German Center for Research Promotion (Project M-0294), and the
German Federal Ministry of Research, Technology and Space (Project QuKuK, Grant No.~16KIS1618K and Project BeRyQC, Grant No.~13N17292). HS acknowledges the contribution from COST Action BridgeQG (CA23130), supported by COST (European Cooperation in Science and Technology).

The visualizations of $(p,q)$ tilings of the Poincaré disc shown in this paper have been generated using the Python package \textsc{hypertiling} \cite{hypertiling}. 
\end{acknowledgments}

\bibliography{literature}

\begin{thebibliography}{70}%
\makeatletter
\providecommand \@ifxundefined [1]{%
 \@ifx{#1\undefined}
}%
\providecommand \@ifnum [1]{%
 \ifnum #1\expandafter \@firstoftwo
 \else \expandafter \@secondoftwo
 \fi
}%
\providecommand \@ifx [1]{%
 \ifx #1\expandafter \@firstoftwo
 \else \expandafter \@secondoftwo
 \fi
}%
\providecommand \natexlab [1]{#1}%
\providecommand \enquote  [1]{``#1''}%
\providecommand \bibnamefont  [1]{#1}%
\providecommand \bibfnamefont [1]{#1}%
\providecommand \citenamefont [1]{#1}%
\providecommand \href@noop [0]{\@secondoftwo}%
\providecommand \href [0]{\begingroup \@sanitize@url \@href}%
\providecommand \@href[1]{\@@startlink{#1}\@@href}%
\providecommand \@@href[1]{\endgroup#1\@@endlink}%
\providecommand \@sanitize@url [0]{\catcode `\\12\catcode `\$12\catcode
  `\&12\catcode `\#12\catcode `\^12\catcode `\_12\catcode `\%12\relax}%
\providecommand \@@startlink[1]{}%
\providecommand \@@endlink[0]{}%
\providecommand \url  [0]{\begingroup\@sanitize@url \@url }%
\providecommand \@url [1]{\endgroup\@href {#1}{\urlprefix }}%
\providecommand \urlprefix  [0]{URL }%
\providecommand \Eprint [0]{\href }%
\providecommand \doibase [0]{https://doi.org/}%
\providecommand \selectlanguage [0]{\@gobble}%
\providecommand \bibinfo  [0]{\@secondoftwo}%
\providecommand \bibfield  [0]{\@secondoftwo}%
\providecommand \translation [1]{[#1]}%
\providecommand \BibitemOpen [0]{}%
\providecommand \bibitemStop [0]{}%
\providecommand \bibitemNoStop [0]{.\EOS\space}%
\providecommand \EOS [0]{\spacefactor3000\relax}%
\providecommand \BibitemShut  [1]{\csname bibitem#1\endcsname}%
\let\auto@bib@innerbib\@empty
\bibitem [{\citenamefont {Maldacena}(1999)}]{Maldacena_1999}%
  \BibitemOpen
  \bibfield  {author} {\bibinfo {author} {\bibfnamefont {J.}~\bibnamefont
  {Maldacena}},\ }\href {https://doi.org/10.1023/a:1026654312961} {\bibfield
  {journal} {\bibinfo  {journal} {International Journal of Theoretical
  Physics}\ }\textbf {\bibinfo {volume} {38}},\ \bibinfo {pages} {1113–1133}
  (\bibinfo {year} {1999})}\BibitemShut {NoStop}%
\bibitem [{\citenamefont {Gubser}\ \emph {et~al.}(1998)\citenamefont {Gubser},
  \citenamefont {Klebanov},\ and\ \citenamefont {Polyakov}}]{Gubser_1998}%
  \BibitemOpen
  \bibfield  {author} {\bibinfo {author} {\bibfnamefont {S.}~\bibnamefont
  {Gubser}}, \bibinfo {author} {\bibfnamefont {I.}~\bibnamefont {Klebanov}},\
  and\ \bibinfo {author} {\bibfnamefont {A.}~\bibnamefont {Polyakov}},\ }\href
  {https://doi.org/10.1016/s0370-2693(98)00377-3} {\bibfield  {journal}
  {\bibinfo  {journal} {Physics Letters B}\ }\textbf {\bibinfo {volume}
  {428}},\ \bibinfo {pages} {105–114} (\bibinfo {year} {1998})}\BibitemShut
  {NoStop}%
\bibitem [{\citenamefont
  {Swingle}(2012)}]{swingle2012constructingholographicspacetimesusing}%
  \BibitemOpen
  \bibfield  {author} {\bibinfo {author} {\bibfnamefont {B.}~\bibnamefont
  {Swingle}},\ }\href {https://arxiv.org/abs/1209.3304} {\bibinfo {title}
  {Constructing holographic spacetimes using entanglement renormalization}}
  (\bibinfo {year} {2012}),\ \Eprint {https://arxiv.org/abs/1209.3304}
  {arXiv:1209.3304 [hep-th]} \BibitemShut {NoStop}%
\bibitem [{\citenamefont {Van~Raamsdonk}(2010)}]{VanRaamsdonk:2010pw}%
  \BibitemOpen
  \bibfield  {author} {\bibinfo {author} {\bibfnamefont {M.}~\bibnamefont
  {Van~Raamsdonk}},\ }\href {https://doi.org/10.1007/s10714-010-1034-0}
  {\bibfield  {journal} {\bibinfo  {journal} {General Relativity and
  Gravitation}\ }\textbf {\bibinfo {volume} {42}},\ \bibinfo {pages} {2323}
  (\bibinfo {year} {2010})},\ \Eprint {https://arxiv.org/abs/1005.3035}
  {arXiv:1005.3035 [hep-th]} \BibitemShut {NoStop}%
\bibitem [{\citenamefont {Jacobson}(2016)}]{Jacobson_2016}%
  \BibitemOpen
  \bibfield  {author} {\bibinfo {author} {\bibfnamefont {T.}~\bibnamefont
  {Jacobson}},\ }\href {https://doi.org/10.1103/physrevlett.116.201101}
  {\bibfield  {journal} {\bibinfo  {journal} {Physical Review Letters}\
  }\textbf {\bibinfo {volume} {116}},\ \bibinfo {pages} {201101} (\bibinfo
  {year} {2016})}\BibitemShut {NoStop}%
\bibitem [{\citenamefont {Pastawski}\ \emph {et~al.}(2015)\citenamefont
  {Pastawski}, \citenamefont {Yoshida}, \citenamefont {Harlow},\ and\
  \citenamefont {Preskill}}]{Pastawski_2015}%
  \BibitemOpen
  \bibfield  {author} {\bibinfo {author} {\bibfnamefont {F.}~\bibnamefont
  {Pastawski}}, \bibinfo {author} {\bibfnamefont {B.}~\bibnamefont {Yoshida}},
  \bibinfo {author} {\bibfnamefont {D.}~\bibnamefont {Harlow}},\ and\ \bibinfo
  {author} {\bibfnamefont {J.}~\bibnamefont {Preskill}},\ }\href
  {https://doi.org/10.1007/jhep06(2015)149} {\bibfield  {journal} {\bibinfo
  {journal} {Journal of High Energy Physics}\ }\textbf {\bibinfo {volume}
  {2015}},\ \bibinfo {pages} {149} (\bibinfo {year} {2015})}\BibitemShut
  {NoStop}%
\bibitem [{\citenamefont {Chemissany}\ \emph {et~al.}(2025)\citenamefont
  {Chemissany}, \citenamefont {Gesteau}, \citenamefont {Jahn}, \citenamefont
  {Murphy},\ and\ \citenamefont
  {Shaposhnik}}]{chemissany2025infinitetensornetworkscomplementary}%
  \BibitemOpen
  \bibfield  {author} {\bibinfo {author} {\bibfnamefont {W.}~\bibnamefont
  {Chemissany}}, \bibinfo {author} {\bibfnamefont {E.}~\bibnamefont {Gesteau}},
  \bibinfo {author} {\bibfnamefont {A.}~\bibnamefont {Jahn}}, \bibinfo {author}
  {\bibfnamefont {D.}~\bibnamefont {Murphy}},\ and\ \bibinfo {author}
  {\bibfnamefont {L.}~\bibnamefont {Shaposhnik}},\ }\href
  {https://arxiv.org/abs/2504.00096} {\bibinfo {title} {On infinite tensor
  networks, complementary recovery and type ii factors}} (\bibinfo {year}
  {2025}),\ \Eprint {https://arxiv.org/abs/2504.00096} {arXiv:2504.00096
  [hep-th]} \BibitemShut {NoStop}%
\bibitem [{\citenamefont {Rovelli}\ and\ \citenamefont
  {Smolin}(1995{\natexlab{a}})}]{Rovelli_1995b}%
  \BibitemOpen
  \bibfield  {author} {\bibinfo {author} {\bibfnamefont {C.}~\bibnamefont
  {Rovelli}}\ and\ \bibinfo {author} {\bibfnamefont {L.}~\bibnamefont
  {Smolin}},\ }\href {https://doi.org/10.1103/physrevd.52.5743} {\bibfield
  {journal} {\bibinfo  {journal} {Physical Review D}\ }\textbf {\bibinfo
  {volume} {52}},\ \bibinfo {pages} {5743–5759} (\bibinfo {year}
  {1995}{\natexlab{a}})}\BibitemShut {NoStop}%
\bibitem [{\citenamefont {Thiemann}(2007)}]{Thiemann:2007pyv}%
  \BibitemOpen
  \bibfield  {author} {\bibinfo {author} {\bibfnamefont {T.}~\bibnamefont
  {Thiemann}},\ }\href {https://doi.org/10.1017/CBO9780511755682} {\emph
  {\bibinfo {title} {{Modern Canonical Quantum General Relativity}}}},\
  Cambridge Monographs on Mathematical Physics\ (\bibinfo  {publisher}
  {Cambridge University Press},\ \bibinfo {year} {2007})\BibitemShut {NoStop}%
\bibitem [{\citenamefont {Cao}(2024)}]{Cao2024}%
  \BibitemOpen
  \bibfield  {author} {\bibinfo {author} {\bibfnamefont {C.}~\bibnamefont
  {Cao}},\ }\href {https://doi.org/10.1007/JHEP11(2024)105} {\bibfield
  {journal} {\bibinfo  {journal} {Journal of High Energy Physics}\ }\textbf
  {\bibinfo {volume} {2024}},\ \bibinfo {pages} {105} (\bibinfo {year}
  {2024})}\BibitemShut {NoStop}%
\bibitem [{\citenamefont {Bernards}\ and\ \citenamefont
  {G{\"u}hne}(2024)}]{bernards2024multiparticle}%
  \BibitemOpen
  \bibfield  {author} {\bibinfo {author} {\bibfnamefont {F.}~\bibnamefont
  {Bernards}}\ and\ \bibinfo {author} {\bibfnamefont {O.}~\bibnamefont
  {G{\"u}hne}},\ }\href {https://doi.org/10.1063/5.0159105} {\bibfield
  {journal} {\bibinfo  {journal} {Journal of Mathematical Physics}\ }\textbf
  {\bibinfo {volume} {65}},\ \bibinfo {pages} {012201} (\bibinfo {year}
  {2024})}\BibitemShut {NoStop}%
\bibitem [{\citenamefont {Bernards}(2022)}]{Bernards_PHD}%
  \BibitemOpen
  \bibfield  {author} {\bibinfo {author} {\bibfnamefont {F.}~\bibnamefont
  {Bernards}},\ }\emph {\bibinfo {title} {Generalized Bell Inequalities and
  Quantum Entanglement}},\ \href
  {https://doi.org/http://dx.doi.org/10.25819/ubsi/10149} {Ph.D. thesis},\
  \bibinfo  {school} {Universität Siegen} (\bibinfo {year} {2022})\BibitemShut
  {NoStop}%
\bibitem [{\citenamefont {Li}\ \emph {et~al.}(2017)\citenamefont {Li},
  \citenamefont {Han}, \citenamefont {Grassl},\ and\ \citenamefont
  {Zeng}}]{Li_2017}%
  \BibitemOpen
  \bibfield  {author} {\bibinfo {author} {\bibfnamefont {Y.}~\bibnamefont
  {Li}}, \bibinfo {author} {\bibfnamefont {M.}~\bibnamefont {Han}}, \bibinfo
  {author} {\bibfnamefont {M.}~\bibnamefont {Grassl}},\ and\ \bibinfo {author}
  {\bibfnamefont {B.}~\bibnamefont {Zeng}},\ }\href
  {https://doi.org/10.1088/1367-2630/aa7235} {\bibfield  {journal} {\bibinfo
  {journal} {New Journal of Physics}\ }\textbf {\bibinfo {volume} {19}},\
  \bibinfo {pages} {063029} (\bibinfo {year} {2017})}\BibitemShut {NoStop}%
\bibitem [{\citenamefont {Mansuroglu}\ and\ \citenamefont
  {Sahlmann}(2023)}]{Mansuroglu_2023}%
  \BibitemOpen
  \bibfield  {author} {\bibinfo {author} {\bibfnamefont {R.}~\bibnamefont
  {Mansuroglu}}\ and\ \bibinfo {author} {\bibfnamefont {H.}~\bibnamefont
  {Sahlmann}},\ }\href {https://doi.org/10.1007/jhep02(2023)062} {\bibfield
  {journal} {\bibinfo  {journal} {Journal of High Energy Physics}\ }\textbf
  {\bibinfo {volume} {2023}},\ \bibinfo {pages} {62} (\bibinfo {year}
  {2023})}\BibitemShut {NoStop}%
\bibitem [{\citenamefont {Hayden}\ \emph {et~al.}(2021)\citenamefont {Hayden},
  \citenamefont {Nezami}, \citenamefont {Popescu},\ and\ \citenamefont
  {Salton}}]{Hayden_2021}%
  \BibitemOpen
  \bibfield  {author} {\bibinfo {author} {\bibfnamefont {P.}~\bibnamefont
  {Hayden}}, \bibinfo {author} {\bibfnamefont {S.}~\bibnamefont {Nezami}},
  \bibinfo {author} {\bibfnamefont {S.}~\bibnamefont {Popescu}},\ and\ \bibinfo
  {author} {\bibfnamefont {G.}~\bibnamefont {Salton}},\ }\href
  {https://doi.org/10.1103/prxquantum.2.010326} {\bibfield  {journal} {\bibinfo
   {journal} {PRX Quantum}\ }\textbf {\bibinfo {volume} {2}},\ \bibinfo {pages}
  {010326} (\bibinfo {year} {2021})}\BibitemShut {NoStop}%
\bibitem [{\citenamefont {Faist}\ \emph {et~al.}(2020)\citenamefont {Faist},
  \citenamefont {Nezami}, \citenamefont {Albert}, \citenamefont {Salton},
  \citenamefont {Pastawski}, \citenamefont {Hayden},\ and\ \citenamefont
  {Preskill}}]{Faist_2020}%
  \BibitemOpen
  \bibfield  {author} {\bibinfo {author} {\bibfnamefont {P.}~\bibnamefont
  {Faist}}, \bibinfo {author} {\bibfnamefont {S.}~\bibnamefont {Nezami}},
  \bibinfo {author} {\bibfnamefont {V.~V.}\ \bibnamefont {Albert}}, \bibinfo
  {author} {\bibfnamefont {G.}~\bibnamefont {Salton}}, \bibinfo {author}
  {\bibfnamefont {F.}~\bibnamefont {Pastawski}}, \bibinfo {author}
  {\bibfnamefont {P.}~\bibnamefont {Hayden}},\ and\ \bibinfo {author}
  {\bibfnamefont {J.}~\bibnamefont {Preskill}},\ }\href
  {https://doi.org/10.1103/physrevx.10.041018} {\bibfield  {journal} {\bibinfo
  {journal} {Physical Review X}\ }\textbf {\bibinfo {volume} {10}},\ \bibinfo
  {pages} {041018} (\bibinfo {year} {2020})}\BibitemShut {NoStop}%
\bibitem [{\citenamefont {Evenbly}(2017)}]{evenbly2017hyperinvariant}%
  \BibitemOpen
  \bibfield  {author} {\bibinfo {author} {\bibfnamefont {G.}~\bibnamefont
  {Evenbly}},\ }\href {https://doi.org/10.1103/PhysRevLett.119.141602}
  {\bibfield  {journal} {\bibinfo  {journal} {Physical Review Letters}\
  }\textbf {\bibinfo {volume} {119}},\ \bibinfo {pages} {141602} (\bibinfo
  {year} {2017})}\BibitemShut {NoStop}%
\bibitem [{\citenamefont {Steinberg}\ \emph {et~al.}(2023)\citenamefont
  {Steinberg}, \citenamefont {Feld},\ and\ \citenamefont
  {Jahn}}]{Steinberg_2023}%
  \BibitemOpen
  \bibfield  {author} {\bibinfo {author} {\bibfnamefont {M.}~\bibnamefont
  {Steinberg}}, \bibinfo {author} {\bibfnamefont {S.}~\bibnamefont {Feld}},\
  and\ \bibinfo {author} {\bibfnamefont {A.}~\bibnamefont {Jahn}},\ }\href
  {https://doi.org/10.1038/s41467-023-42743-z} {\bibfield  {journal} {\bibinfo
  {journal} {Nature Communications}\ }\textbf {\bibinfo {volume} {14}},\
  \bibinfo {pages} {7314} (\bibinfo {year} {2023})}\BibitemShut {NoStop}%
\bibitem [{\citenamefont {Steinberg}\ \emph {et~al.}(2025)\citenamefont
  {Steinberg}, \citenamefont {Fan}, \citenamefont {Harris}, \citenamefont
  {Elkouss}, \citenamefont {Feld},\ and\ \citenamefont
  {Jahn}}]{steinberg2024far}%
  \BibitemOpen
  \bibfield  {author} {\bibinfo {author} {\bibfnamefont {M.}~\bibnamefont
  {Steinberg}}, \bibinfo {author} {\bibfnamefont {J.}~\bibnamefont {Fan}},
  \bibinfo {author} {\bibfnamefont {R.~J.}\ \bibnamefont {Harris}}, \bibinfo
  {author} {\bibfnamefont {D.}~\bibnamefont {Elkouss}}, \bibinfo {author}
  {\bibfnamefont {S.}~\bibnamefont {Feld}},\ and\ \bibinfo {author}
  {\bibfnamefont {A.}~\bibnamefont {Jahn}},\ }\href
  {https://doi.org/10.22331/q-2025-08-08-1826} {\bibfield  {journal} {\bibinfo
  {journal} {Quantum}\ }\textbf {\bibinfo {volume} {9}},\ \bibinfo {pages}
  {1826} (\bibinfo {year} {2025})}\BibitemShut {NoStop}%
\bibitem [{\citenamefont {Han}\ and\ \citenamefont {Hung}(2017)}]{Han:2016xmb}%
  \BibitemOpen
  \bibfield  {author} {\bibinfo {author} {\bibfnamefont {M.}~\bibnamefont
  {Han}}\ and\ \bibinfo {author} {\bibfnamefont {L.-Y.}\ \bibnamefont {Hung}},\
  }\href {https://doi.org/10.1103/PhysRevD.95.024011} {\bibfield  {journal}
  {\bibinfo  {journal} {Physical Review D}\ }\textbf {\bibinfo {volume} {95}},\
  \bibinfo {pages} {024011} (\bibinfo {year} {2017})},\ \Eprint
  {https://arxiv.org/abs/1610.02134} {arXiv:1610.02134 [hep-th]} \BibitemShut
  {NoStop}%
\bibitem [{\citenamefont {Chirco}\ \emph {et~al.}(2018)\citenamefont {Chirco},
  \citenamefont {Oriti},\ and\ \citenamefont {Zhang}}]{Chirco:2017vhs}%
  \BibitemOpen
  \bibfield  {author} {\bibinfo {author} {\bibfnamefont {G.}~\bibnamefont
  {Chirco}}, \bibinfo {author} {\bibfnamefont {D.}~\bibnamefont {Oriti}},\ and\
  \bibinfo {author} {\bibfnamefont {M.}~\bibnamefont {Zhang}},\ }\href
  {https://doi.org/10.1088/1361-6382/aabf55} {\bibfield  {journal} {\bibinfo
  {journal} {Classical and Quantum Gravity}\ }\textbf {\bibinfo {volume}
  {35}},\ \bibinfo {pages} {115011} (\bibinfo {year} {2018})},\ \Eprint
  {https://arxiv.org/abs/1701.01383} {arXiv:1701.01383 [gr-qc]} \BibitemShut
  {NoStop}%
\bibitem [{\citenamefont {Chirco}\ \emph {et~al.}(2020)\citenamefont {Chirco},
  \citenamefont {Goe{\ss}mann}, \citenamefont {Oriti},\ and\ \citenamefont
  {Zhang}}]{Chirco:2019dlx}%
  \BibitemOpen
  \bibfield  {author} {\bibinfo {author} {\bibfnamefont {G.}~\bibnamefont
  {Chirco}}, \bibinfo {author} {\bibfnamefont {A.}~\bibnamefont
  {Goe{\ss}mann}}, \bibinfo {author} {\bibfnamefont {D.}~\bibnamefont
  {Oriti}},\ and\ \bibinfo {author} {\bibfnamefont {M.}~\bibnamefont {Zhang}},\
  }\href {https://doi.org/10.1088/1361-6382/ab7bb9} {\bibfield  {journal}
  {\bibinfo  {journal} {Classical and Quantum Gravity}\ }\textbf {\bibinfo
  {volume} {37}},\ \bibinfo {pages} {095011} (\bibinfo {year} {2020})},\
  \Eprint {https://arxiv.org/abs/1903.07344} {arXiv:1903.07344 [hep-th]}
  \BibitemShut {NoStop}%
\bibitem [{\citenamefont {Colafranceschi}\ and\ \citenamefont
  {Oriti}(2021)}]{Colafranceschi:2020ern}%
  \BibitemOpen
  \bibfield  {author} {\bibinfo {author} {\bibfnamefont {E.}~\bibnamefont
  {Colafranceschi}}\ and\ \bibinfo {author} {\bibfnamefont {D.}~\bibnamefont
  {Oriti}},\ }\href {https://doi.org/10.1007/JHEP07(2021)052} {\bibfield
  {journal} {\bibinfo  {journal} {Journal of High Energy Physics}\ }\textbf
  {\bibinfo {volume} {07}},\ \bibinfo {pages} {52} (\bibinfo {year} {2021})},\
  \Eprint {https://arxiv.org/abs/2012.12622} {arXiv:2012.12622 [hep-th]}
  \BibitemShut {NoStop}%
\bibitem [{\citenamefont {Chirco}\ \emph {et~al.}(2022)\citenamefont {Chirco},
  \citenamefont {Colafranceschi},\ and\ \citenamefont
  {Oriti}}]{Chirco:2021chk}%
  \BibitemOpen
  \bibfield  {author} {\bibinfo {author} {\bibfnamefont {G.}~\bibnamefont
  {Chirco}}, \bibinfo {author} {\bibfnamefont {E.}~\bibnamefont
  {Colafranceschi}},\ and\ \bibinfo {author} {\bibfnamefont {D.}~\bibnamefont
  {Oriti}},\ }\href {https://doi.org/10.1103/PhysRevD.105.046018} {\bibfield
  {journal} {\bibinfo  {journal} {Physical Review D}\ }\textbf {\bibinfo
  {volume} {105}},\ \bibinfo {pages} {046018} (\bibinfo {year} {2022})},\
  \Eprint {https://arxiv.org/abs/2110.15166} {arXiv:2110.15166 [hep-th]}
  \BibitemShut {NoStop}%
\bibitem [{\citenamefont {Colafranceschi}\ \emph {et~al.}(2022)\citenamefont
  {Colafranceschi}, \citenamefont {Chirco},\ and\ \citenamefont
  {Oriti}}]{Colafranceschi_2022}%
  \BibitemOpen
  \bibfield  {author} {\bibinfo {author} {\bibfnamefont {E.}~\bibnamefont
  {Colafranceschi}}, \bibinfo {author} {\bibfnamefont {G.}~\bibnamefont
  {Chirco}},\ and\ \bibinfo {author} {\bibfnamefont {D.}~\bibnamefont
  {Oriti}},\ }\href {https://doi.org/10.1103/physrevd.105.066005} {\bibfield
  {journal} {\bibinfo  {journal} {Physical Review D}\ }\textbf {\bibinfo
  {volume} {105}},\ \bibinfo {pages} {066005} (\bibinfo {year}
  {2022})}\BibitemShut {NoStop}%
\bibitem [{\citenamefont {Huber}\ \emph {et~al.}(2017)\citenamefont {Huber},
  \citenamefont {G\"uhne},\ and\ \citenamefont
  {Siewert}}]{huber2017absolutely}%
  \BibitemOpen
  \bibfield  {author} {\bibinfo {author} {\bibfnamefont {F.}~\bibnamefont
  {Huber}}, \bibinfo {author} {\bibfnamefont {O.}~\bibnamefont {G\"uhne}},\
  and\ \bibinfo {author} {\bibfnamefont {J.}~\bibnamefont {Siewert}},\ }\href
  {https://doi.org/10.1103/PhysRevLett.118.200502} {\bibfield  {journal}
  {\bibinfo  {journal} {Physical Review Letters}\ }\textbf {\bibinfo {volume}
  {118}},\ \bibinfo {pages} {200502} (\bibinfo {year} {2017})}\BibitemShut
  {NoStop}%
\bibitem [{\citenamefont {Rather}\ \emph {et~al.}(2022)\citenamefont {Rather},
  \citenamefont {Burchardt}, \citenamefont {Bruzda}, \citenamefont
  {Rajchel-Mieldzio\ifmmode~\acute{c}\else \'{c}\fi{}}, \citenamefont
  {Lakshminarayan},\ and\ \citenamefont {\ifmmode~\dot{Z}\else
  \.{Z}\fi{}yczkowski}}]{rather_thirty-six_2022}%
  \BibitemOpen
  \bibfield  {author} {\bibinfo {author} {\bibfnamefont {S.~A.}\ \bibnamefont
  {Rather}}, \bibinfo {author} {\bibfnamefont {A.}~\bibnamefont {Burchardt}},
  \bibinfo {author} {\bibfnamefont {W.}~\bibnamefont {Bruzda}}, \bibinfo
  {author} {\bibfnamefont {G.}~\bibnamefont
  {Rajchel-Mieldzio\ifmmode~\acute{c}\else \'{c}\fi{}}}, \bibinfo {author}
  {\bibfnamefont {A.}~\bibnamefont {Lakshminarayan}},\ and\ \bibinfo {author}
  {\bibfnamefont {K.}~\bibnamefont {\ifmmode~\dot{Z}\else
  \.{Z}\fi{}yczkowski}},\ }\href
  {https://doi.org/10.1103/PhysRevLett.128.080507} {\bibfield  {journal}
  {\bibinfo  {journal} {Physical Review Letters}\ }\textbf {\bibinfo {volume}
  {128}},\ \bibinfo {pages} {080507} (\bibinfo {year} {2022})}\BibitemShut
  {NoStop}%
\bibitem [{\citenamefont {Rajchel-Mieldzioć}\ \emph
  {et~al.}(2025)\citenamefont {Rajchel-Mieldzioć}, \citenamefont {Bistroń},
  \citenamefont {Rico}, \citenamefont {Lakshminarayan},\ and\ \citenamefont
  {Życzkowski}}]{rajchelmieldzioć2025absolutely}%
  \BibitemOpen
  \bibfield  {author} {\bibinfo {author} {\bibfnamefont {G.}~\bibnamefont
  {Rajchel-Mieldzioć}}, \bibinfo {author} {\bibfnamefont {R.}~\bibnamefont
  {Bistroń}}, \bibinfo {author} {\bibfnamefont {A.}~\bibnamefont {Rico}},
  \bibinfo {author} {\bibfnamefont {A.}~\bibnamefont {Lakshminarayan}},\ and\
  \bibinfo {author} {\bibfnamefont {K.}~\bibnamefont {Życzkowski}},\ }\href
  {https://arxiv.org/abs/2508.04777} {\bibinfo {title} {Absolutely maximally
  entangled pure states of multipartite quantum systems}} (\bibinfo {year}
  {2025}),\ \Eprint {https://arxiv.org/abs/2508.04777} {arXiv:2508.04777
  [quant-ph]} \BibitemShut {NoStop}%
\bibitem [{\citenamefont {\ifmmode~\dot{Z}\else \.{Z}\fi{}ukowski}\ \emph
  {et~al.}(1993)\citenamefont {\ifmmode~\dot{Z}\else \.{Z}\fi{}ukowski},
  \citenamefont {Zeilinger}, \citenamefont {Horne},\ and\ \citenamefont
  {Ekert}}]{Zukowski_1993}%
  \BibitemOpen
  \bibfield  {author} {\bibinfo {author} {\bibfnamefont {M.}~\bibnamefont
  {\ifmmode~\dot{Z}\else \.{Z}\fi{}ukowski}}, \bibinfo {author} {\bibfnamefont
  {A.}~\bibnamefont {Zeilinger}}, \bibinfo {author} {\bibfnamefont {M.~A.}\
  \bibnamefont {Horne}},\ and\ \bibinfo {author} {\bibfnamefont {A.~K.}\
  \bibnamefont {Ekert}},\ }\href {https://doi.org/10.1103/PhysRevLett.71.4287}
  {\bibfield  {journal} {\bibinfo  {journal} {Phys. Rev. Lett.}\ }\textbf
  {\bibinfo {volume} {71}},\ \bibinfo {pages} {4287} (\bibinfo {year}
  {1993})}\BibitemShut {NoStop}%
\bibitem [{\citenamefont {Vidal}(2008)}]{Vidal_2008}%
  \BibitemOpen
  \bibfield  {author} {\bibinfo {author} {\bibfnamefont {G.}~\bibnamefont
  {Vidal}},\ }\href {https://doi.org/10.1103/physrevlett.101.110501} {\bibfield
   {journal} {\bibinfo  {journal} {Physical Review Letters}\ }\textbf {\bibinfo
  {volume} {101}},\ \bibinfo {pages} {110501} (\bibinfo {year}
  {2008})}\BibitemShut {NoStop}%
\bibitem [{\citenamefont {Bény}(2013)}]{B_ny_2013}%
  \BibitemOpen
  \bibfield  {author} {\bibinfo {author} {\bibfnamefont {C.}~\bibnamefont
  {Bény}},\ }\href {https://doi.org/10.1088/1367-2630/15/2/023020} {\bibfield
  {journal} {\bibinfo  {journal} {New Journal of Physics}\ }\textbf {\bibinfo
  {volume} {15}},\ \bibinfo {pages} {023020} (\bibinfo {year}
  {2013})}\BibitemShut {NoStop}%
\bibitem [{\citenamefont {Chua}\ \emph {et~al.}(2017)\citenamefont {Chua},
  \citenamefont {Passias}, \citenamefont {Tiwari},\ and\ \citenamefont
  {Ryu}}]{Chua_2017}%
  \BibitemOpen
  \bibfield  {author} {\bibinfo {author} {\bibfnamefont {V.}~\bibnamefont
  {Chua}}, \bibinfo {author} {\bibfnamefont {V.}~\bibnamefont {Passias}},
  \bibinfo {author} {\bibfnamefont {A.}~\bibnamefont {Tiwari}},\ and\ \bibinfo
  {author} {\bibfnamefont {S.}~\bibnamefont {Ryu}},\ }\href
  {https://doi.org/10.1103/physrevb.95.195152} {\bibfield  {journal} {\bibinfo
  {journal} {Physical Review B}\ }\textbf {\bibinfo {volume} {95}},\ \bibinfo
  {pages} {195152} (\bibinfo {year} {2017})}\BibitemShut {NoStop}%
\bibitem [{\citenamefont {Ryu}\ and\ \citenamefont {Takayanagi}(2006)}]{Ryu06}%
  \BibitemOpen
  \bibfield  {author} {\bibinfo {author} {\bibfnamefont {S.}~\bibnamefont
  {Ryu}}\ and\ \bibinfo {author} {\bibfnamefont {T.}~\bibnamefont
  {Takayanagi}},\ }\href {https://doi.org/10.1103/PhysRevLett.96.181602}
  {\bibfield  {journal} {\bibinfo  {journal} {Physical Review Letters}\
  }\textbf {\bibinfo {volume} {96}},\ \bibinfo {pages} {181602} (\bibinfo
  {year} {2006})}\BibitemShut {NoStop}%
\bibitem [{\citenamefont {Bistroń}\ \emph {et~al.}(2025)\citenamefont
  {Bistroń}, \citenamefont {Hontarenko},\ and\ \citenamefont
  {Życzkowski}}]{Bistro__2025}%
  \BibitemOpen
  \bibfield  {author} {\bibinfo {author} {\bibfnamefont {R.}~\bibnamefont
  {Bistroń}}, \bibinfo {author} {\bibfnamefont {M.}~\bibnamefont
  {Hontarenko}},\ and\ \bibinfo {author} {\bibfnamefont {K.}~\bibnamefont
  {Życzkowski}},\ }\href {https://doi.org/10.1103/physrevd.111.026006}
  {\bibfield  {journal} {\bibinfo  {journal} {Physical Review D}\ }\textbf
  {\bibinfo {volume} {111}},\ \bibinfo {pages} {026006} (\bibinfo {year}
  {2025})}\BibitemShut {NoStop}%
\bibitem [{\citenamefont {Rovelli}\ and\ \citenamefont
  {Smolin}(1995{\natexlab{b}})}]{Rovelli_1995}%
  \BibitemOpen
  \bibfield  {author} {\bibinfo {author} {\bibfnamefont {C.}~\bibnamefont
  {Rovelli}}\ and\ \bibinfo {author} {\bibfnamefont {L.}~\bibnamefont
  {Smolin}},\ }\href {https://doi.org/10.1016/0550-3213(95)00150-q} {\bibfield
  {journal} {\bibinfo  {journal} {Nuclear Physics B}\ }\textbf {\bibinfo
  {volume} {442}},\ \bibinfo {pages} {593–619} (\bibinfo {year}
  {1995}{\natexlab{b}})}\BibitemShut {NoStop}%
\bibitem [{\citenamefont {Lewandowski}\ \emph {et~al.}(2006)\citenamefont
  {Lewandowski}, \citenamefont {Okołów}, \citenamefont {Sahlmann},\ and\
  \citenamefont {Thiemann}}]{Lewandowski_2006}%
  \BibitemOpen
  \bibfield  {author} {\bibinfo {author} {\bibfnamefont {J.}~\bibnamefont
  {Lewandowski}}, \bibinfo {author} {\bibfnamefont {A.}~\bibnamefont
  {Okołów}}, \bibinfo {author} {\bibfnamefont {H.}~\bibnamefont {Sahlmann}},\
  and\ \bibinfo {author} {\bibfnamefont {T.}~\bibnamefont {Thiemann}},\ }\href
  {https://doi.org/10.1007/s00220-006-0100-7} {\bibfield  {journal} {\bibinfo
  {journal} {Communications in Mathematical Physics}\ }\textbf {\bibinfo
  {volume} {267}},\ \bibinfo {pages} {703–733} (\bibinfo {year}
  {2006})}\BibitemShut {NoStop}%
\bibitem [{\citenamefont {Werner}(1989)}]{Werner89}%
  \BibitemOpen
  \bibfield  {author} {\bibinfo {author} {\bibfnamefont {R.~F.}\ \bibnamefont
  {Werner}},\ }\href {https://doi.org/10.1103/PhysRevA.40.4277} {\bibfield
  {journal} {\bibinfo  {journal} {Physical Review A}\ }\textbf {\bibinfo
  {volume} {40}},\ \bibinfo {pages} {4277} (\bibinfo {year}
  {1989})}\BibitemShut {NoStop}%
\bibitem [{\citenamefont {Eggeling}\ and\ \citenamefont
  {Werner}(2001)}]{Eggeling_2001}%
  \BibitemOpen
  \bibfield  {author} {\bibinfo {author} {\bibfnamefont {T.}~\bibnamefont
  {Eggeling}}\ and\ \bibinfo {author} {\bibfnamefont {R.~F.}\ \bibnamefont
  {Werner}},\ }\href {https://doi.org/10.1103/physreva.63.042111} {\bibfield
  {journal} {\bibinfo  {journal} {Physical Review A}\ }\textbf {\bibinfo
  {volume} {63}},\ \bibinfo {pages} {042111} (\bibinfo {year}
  {2001})}\BibitemShut {NoStop}%
\bibitem [{\citenamefont {Cabello}(2003)}]{Cabello_2003}%
  \BibitemOpen
  \bibfield  {author} {\bibinfo {author} {\bibfnamefont {A.}~\bibnamefont
  {Cabello}},\ }\href {https://doi.org/10.1080/09500340308234551} {\bibfield
  {journal} {\bibinfo  {journal} {Journal of Modern Optics}\ }\textbf {\bibinfo
  {volume} {50}},\ \bibinfo {pages} {1049–1061} (\bibinfo {year}
  {2003})}\BibitemShut {NoStop}%
\bibitem [{\citenamefont {Ashtekar}\ and\ \citenamefont
  {Lewandowski}(1993)}]{ashtekar1993representationtheoryanalyticholonomy}%
  \BibitemOpen
  \bibfield  {author} {\bibinfo {author} {\bibfnamefont {A.}~\bibnamefont
  {Ashtekar}}\ and\ \bibinfo {author} {\bibfnamefont {J.}~\bibnamefont
  {Lewandowski}},\ }\href {https://arxiv.org/abs/gr-qc/9311010} {\bibinfo
  {title} {Representation theory of analytic holonomy {C}* algebras}} (\bibinfo
  {year} {1993}),\ \Eprint {https://arxiv.org/abs/gr-qc/9311010}
  {arXiv:gr-qc/9311010 [gr-qc]} \BibitemShut {NoStop}%
\bibitem [{\citenamefont {Ashtekar}\ and\ \citenamefont
  {Lewandowski}(1995)}]{Ashtekar:1994mh}%
  \BibitemOpen
  \bibfield  {author} {\bibinfo {author} {\bibfnamefont {A.}~\bibnamefont
  {Ashtekar}}\ and\ \bibinfo {author} {\bibfnamefont {J.}~\bibnamefont
  {Lewandowski}},\ }\href {https://doi.org/10.1063/1.531037} {\bibfield
  {journal} {\bibinfo  {journal} {Journal of Mathematical Physics}\ }\textbf
  {\bibinfo {volume} {36}},\ \bibinfo {pages} {2170} (\bibinfo {year}
  {1995})},\ \Eprint {https://arxiv.org/abs/gr-qc/9411046}
  {arXiv:gr-qc/9411046} \BibitemShut {NoStop}%
\bibitem [{\citenamefont {Mansuroglu}\ and\ \citenamefont
  {Sahlmann}(2021{\natexlab{a}})}]{Mansuroglu_2021_kinematics}%
  \BibitemOpen
  \bibfield  {author} {\bibinfo {author} {\bibfnamefont {R.}~\bibnamefont
  {Mansuroglu}}\ and\ \bibinfo {author} {\bibfnamefont {H.}~\bibnamefont
  {Sahlmann}},\ }\href {https://doi.org/10.1103/physrevd.103.106010} {\bibfield
   {journal} {\bibinfo  {journal} {Physical Review D}\ }\textbf {\bibinfo
  {volume} {103}},\ \bibinfo {pages} {106010} (\bibinfo {year}
  {2021}{\natexlab{a}})}\BibitemShut {NoStop}%
\bibitem [{\citenamefont {Mansuroglu}\ and\ \citenamefont
  {Sahlmann}(2021{\natexlab{b}})}]{Mansuroglu_2021_fermions}%
  \BibitemOpen
  \bibfield  {author} {\bibinfo {author} {\bibfnamefont {R.}~\bibnamefont
  {Mansuroglu}}\ and\ \bibinfo {author} {\bibfnamefont {H.}~\bibnamefont
  {Sahlmann}},\ }\href {https://doi.org/10.1103/physrevd.103.066016} {\bibfield
   {journal} {\bibinfo  {journal} {Physical Review D}\ }\textbf {\bibinfo
  {volume} {103}},\ \bibinfo {pages} {066016} (\bibinfo {year}
  {2021}{\natexlab{b}})}\BibitemShut {NoStop}%
\bibitem [{\citenamefont {Thiemann}(1998{\natexlab{a}})}]{Thiemann_1998}%
  \BibitemOpen
  \bibfield  {author} {\bibinfo {author} {\bibfnamefont {T.}~\bibnamefont
  {Thiemann}},\ }\href {https://doi.org/10.1088/0264-9381/15/6/006} {\bibfield
  {journal} {\bibinfo  {journal} {Classical and Quantum Gravity}\ }\textbf
  {\bibinfo {volume} {15}},\ \bibinfo {pages} {1487–1512} (\bibinfo {year}
  {1998}{\natexlab{a}})}\BibitemShut {NoStop}%
\bibitem [{\citenamefont {Ling}\ \emph {et~al.}(2019)\citenamefont {Ling},
  \citenamefont {Liu}, \citenamefont {Xian},\ and\ \citenamefont
  {Xiao}}]{Ling_2019}%
  \BibitemOpen
  \bibfield  {author} {\bibinfo {author} {\bibfnamefont {Y.}~\bibnamefont
  {Ling}}, \bibinfo {author} {\bibfnamefont {Y.}~\bibnamefont {Liu}}, \bibinfo
  {author} {\bibfnamefont {Z.-Y.}\ \bibnamefont {Xian}},\ and\ \bibinfo
  {author} {\bibfnamefont {Y.}~\bibnamefont {Xiao}},\ }\href
  {https://doi.org/10.1007/jhep06(2019)032} {\bibfield  {journal} {\bibinfo
  {journal} {Journal of High Energy Physics}\ }\textbf {\bibinfo {volume}
  {2019}},\ \bibinfo {pages} {32} (\bibinfo {year} {2019})}\BibitemShut
  {NoStop}%
\bibitem [{\citenamefont {Freidel}\ \emph {et~al.}(2003)\citenamefont
  {Freidel}, \citenamefont {Livine},\ and\ \citenamefont
  {Rovelli}}]{Freidel_2003}%
  \BibitemOpen
  \bibfield  {author} {\bibinfo {author} {\bibfnamefont {L.}~\bibnamefont
  {Freidel}}, \bibinfo {author} {\bibfnamefont {E.~R.}\ \bibnamefont
  {Livine}},\ and\ \bibinfo {author} {\bibfnamefont {C.}~\bibnamefont
  {Rovelli}},\ }\href {https://doi.org/10.1088/0264-9381/20/8/304} {\bibfield
  {journal} {\bibinfo  {journal} {Classical and Quantum Gravity}\ }\textbf
  {\bibinfo {volume} {20}},\ \bibinfo {pages} {1463–1478} (\bibinfo {year}
  {2003})}\BibitemShut {NoStop}%
\bibitem [{\citenamefont {Thiemann}(1998{\natexlab{b}})}]{Thiemann_1998_QSD}%
  \BibitemOpen
  \bibfield  {author} {\bibinfo {author} {\bibfnamefont {T.}~\bibnamefont
  {Thiemann}},\ }\href {https://doi.org/10.1088/0264-9381/15/5/011} {\bibfield
  {journal} {\bibinfo  {journal} {Classical and Quantum Gravity}\ }\textbf
  {\bibinfo {volume} {15}},\ \bibinfo {pages} {1249–1280} (\bibinfo {year}
  {1998}{\natexlab{b}})}\BibitemShut {NoStop}%
\bibitem [{\citenamefont {Long}\ and\ \citenamefont {Ma}(2020)}]{Long_2020}%
  \BibitemOpen
  \bibfield  {author} {\bibinfo {author} {\bibfnamefont {G.}~\bibnamefont
  {Long}}\ and\ \bibinfo {author} {\bibfnamefont {Y.}~\bibnamefont {Ma}},\
  }\href {https://doi.org/10.1103/physrevd.101.084032} {\bibfield  {journal}
  {\bibinfo  {journal} {Physical Review D}\ }\textbf {\bibinfo {volume}
  {101}},\ \bibinfo {pages} {084032} (\bibinfo {year} {2020})}\BibitemShut
  {NoStop}%
\bibitem [{\citenamefont {Aschauer}\ \emph {et~al.}(2004)\citenamefont
  {Aschauer}, \citenamefont {Calsamiglia}, \citenamefont {Hein},\ and\
  \citenamefont {Briegel}}]{aschauer_2004_local}%
  \BibitemOpen
  \bibfield  {author} {\bibinfo {author} {\bibfnamefont {H.}~\bibnamefont
  {Aschauer}}, \bibinfo {author} {\bibfnamefont {J.}~\bibnamefont
  {Calsamiglia}}, \bibinfo {author} {\bibfnamefont {M.}~\bibnamefont {Hein}},\
  and\ \bibinfo {author} {\bibfnamefont {H.~J.}\ \bibnamefont {Briegel}},\
  }\href {https://doi.org/10.26421/QIC4.5-4} {\bibfield  {journal} {\bibinfo
  {journal} {Quantum Information and Computation}\ }\textbf {\bibinfo {volume}
  {4}},\ \bibinfo {pages} {383–395} (\bibinfo {year} {2004})}\BibitemShut
  {NoStop}%
\bibitem [{\citenamefont {Wyderka}\ and\ \citenamefont
  {Gühne}(2020)}]{wyderka_2020_characterizing}%
  \BibitemOpen
  \bibfield  {author} {\bibinfo {author} {\bibfnamefont {N.}~\bibnamefont
  {Wyderka}}\ and\ \bibinfo {author} {\bibfnamefont {O.}~\bibnamefont
  {Gühne}},\ }\href {https://doi.org/10.1088/1751-8121/ab7f0a} {\bibfield
  {journal} {\bibinfo  {journal} {Journal of Physics A: Mathematical and
  Theoretical}\ }\textbf {\bibinfo {volume} {53}},\ \bibinfo {pages} {345302}
  (\bibinfo {year} {2020})}\BibitemShut {NoStop}%
\bibitem [{\citenamefont {Serrano-Ens\'astiga}\ \emph
  {et~al.}(2026)\citenamefont {Serrano-Ens\'astiga}, \citenamefont {Giraud},\
  and\ \citenamefont {Martin}}]{serranoensástiga2025multiqubit}%
  \BibitemOpen
  \bibfield  {author} {\bibinfo {author} {\bibfnamefont {E.}~\bibnamefont
  {Serrano-Ens\'astiga}}, \bibinfo {author} {\bibfnamefont {O.}~\bibnamefont
  {Giraud}},\ and\ \bibinfo {author} {\bibfnamefont {J.}~\bibnamefont
  {Martin}},\ }\href {https://doi.org/10.1103/9fkf-hm8l} {\bibfield  {journal}
  {\bibinfo  {journal} {Phys. Rev. A}\ }\textbf {\bibinfo {volume} {113}},\
  \bibinfo {pages} {012415} (\bibinfo {year} {2026})}\BibitemShut {NoStop}%
\bibitem [{\citenamefont {Hayden}\ \emph {et~al.}(2016)\citenamefont {Hayden},
  \citenamefont {Nezami}, \citenamefont {Qi}, \citenamefont {Thomas},
  \citenamefont {Walter},\ and\ \citenamefont {Yang}}]{Hayden_2016}%
  \BibitemOpen
  \bibfield  {author} {\bibinfo {author} {\bibfnamefont {P.}~\bibnamefont
  {Hayden}}, \bibinfo {author} {\bibfnamefont {S.}~\bibnamefont {Nezami}},
  \bibinfo {author} {\bibfnamefont {X.-L.}\ \bibnamefont {Qi}}, \bibinfo
  {author} {\bibfnamefont {N.}~\bibnamefont {Thomas}}, \bibinfo {author}
  {\bibfnamefont {M.}~\bibnamefont {Walter}},\ and\ \bibinfo {author}
  {\bibfnamefont {Z.}~\bibnamefont {Yang}},\ }\href
  {https://doi.org/10.1007/jhep11(2016)009} {\bibfield  {journal} {\bibinfo
  {journal} {Journal of High Energy Physics}\ }\textbf {\bibinfo {volume}
  {2016}},\ \bibinfo {pages} {9} (\bibinfo {year} {2016})}\BibitemShut
  {NoStop}%
\bibitem [{\citenamefont {Li}\ \emph {et~al.}(2018)\citenamefont {Li},
  \citenamefont {Han}, \citenamefont {Ruan},\ and\ \citenamefont
  {Zeng}}]{Li_2018}%
  \BibitemOpen
  \bibfield  {author} {\bibinfo {author} {\bibfnamefont {Y.}~\bibnamefont
  {Li}}, \bibinfo {author} {\bibfnamefont {M.}~\bibnamefont {Han}}, \bibinfo
  {author} {\bibfnamefont {D.}~\bibnamefont {Ruan}},\ and\ \bibinfo {author}
  {\bibfnamefont {B.}~\bibnamefont {Zeng}},\ }\href
  {https://doi.org/10.1088/1751-8121/aab5de} {\bibfield  {journal} {\bibinfo
  {journal} {Journal of Physics A: Mathematical and Theoretical}\ }\textbf
  {\bibinfo {volume} {51}},\ \bibinfo {pages} {175303} (\bibinfo {year}
  {2018})}\BibitemShut {NoStop}%
\bibitem [{\citenamefont {Doroudiani}\ and\ \citenamefont
  {Karimipour}(2020)}]{Doroudiani2020}%
  \BibitemOpen
  \bibfield  {author} {\bibinfo {author} {\bibfnamefont {M.}~\bibnamefont
  {Doroudiani}}\ and\ \bibinfo {author} {\bibfnamefont {V.}~\bibnamefont
  {Karimipour}},\ }\href {https://doi.org/10.1103/PhysRevA.102.012427}
  {\bibfield  {journal} {\bibinfo  {journal} {Physical Review A}\ }\textbf
  {\bibinfo {volume} {102}},\ \bibinfo {pages} {012427} (\bibinfo {year}
  {2020})}\BibitemShut {NoStop}%
\bibitem [{\citenamefont {Wang}(2021)}]{wang2021planar}%
  \BibitemOpen
  \bibfield  {author} {\bibinfo {author} {\bibfnamefont {Y.-L.}\ \bibnamefont
  {Wang}},\ }\href {https://doi.org/10.1007/s11128-021-03204-y} {\bibfield
  {journal} {\bibinfo  {journal} {Quantum Information Processing}\ }\textbf
  {\bibinfo {volume} {20}},\ \bibinfo {pages} {271} (\bibinfo {year}
  {2021})}\BibitemShut {NoStop}%
\bibitem [{\citenamefont {Akers}\ and\ \citenamefont
  {Rath}(2020)}]{Akers_2020}%
  \BibitemOpen
  \bibfield  {author} {\bibinfo {author} {\bibfnamefont {C.}~\bibnamefont
  {Akers}}\ and\ \bibinfo {author} {\bibfnamefont {P.}~\bibnamefont {Rath}},\
  }\href {https://doi.org/10.1007/jhep04(2020)208} {\bibfield  {journal}
  {\bibinfo  {journal} {Journal of High Energy Physics}\ }\textbf {\bibinfo
  {volume} {2020}},\ \bibinfo {pages} {208} (\bibinfo {year}
  {2020})}\BibitemShut {NoStop}%
\bibitem [{\citenamefont {Mori}\ and\ \citenamefont
  {Yoshida}(2026)}]{Mori_2026}%
  \BibitemOpen
  \bibfield  {author} {\bibinfo {author} {\bibfnamefont {T.}~\bibnamefont
  {Mori}}\ and\ \bibinfo {author} {\bibfnamefont {B.}~\bibnamefont {Yoshida}},\
  }\href {https://doi.org/10.1007/jhep01(2026)125} {\bibfield  {journal}
  {\bibinfo  {journal} {Journal of High Energy Physics}\ }\textbf {\bibinfo
  {volume} {2026}},\ \bibinfo {pages} {125} (\bibinfo {year}
  {2026})}\BibitemShut {NoStop}%
\bibitem [{\citenamefont {Li}\ \emph {et~al.}(2025)\citenamefont {Li},
  \citenamefont {Mori},\ and\ \citenamefont
  {Yoshida}}]{li2025tripartitehaarrandomstate}%
  \BibitemOpen
  \bibfield  {author} {\bibinfo {author} {\bibfnamefont {Z.}~\bibnamefont
  {Li}}, \bibinfo {author} {\bibfnamefont {T.}~\bibnamefont {Mori}},\ and\
  \bibinfo {author} {\bibfnamefont {B.}~\bibnamefont {Yoshida}},\ }\href
  {https://arxiv.org/abs/2502.04437} {\bibinfo {title} {Tripartite {H}aar
  random state has no bipartite entanglement}} (\bibinfo {year} {2025}),\
  \Eprint {https://arxiv.org/abs/2502.04437} {arXiv:2502.04437 [quant-ph]}
  \BibitemShut {NoStop}%
\bibitem [{\citenamefont {Rumer}(1932)}]{Rumer1932}%
  \BibitemOpen
  \bibfield  {author} {\bibinfo {author} {\bibfnamefont {G.}~\bibnamefont
  {Rumer}},\ }\href {http://eudml.org/doc/59385} {\bibfield  {journal}
  {\bibinfo  {journal} {Nachrichten von der Gesellschaft der Wissenschaften zu
  Göttingen, Mathematisch-Physikalische Klasse}\ }\textbf {\bibinfo {volume}
  {1932}},\ \bibinfo {pages} {337} (\bibinfo {year} {1932})}\BibitemShut
  {NoStop}%
\bibitem [{\citenamefont {Faulkner}\ \emph {et~al.}(2013)\citenamefont
  {Faulkner}, \citenamefont {Lewkowycz},\ and\ \citenamefont
  {Maldacena}}]{Faulkner_2013}%
  \BibitemOpen
  \bibfield  {author} {\bibinfo {author} {\bibfnamefont {T.}~\bibnamefont
  {Faulkner}}, \bibinfo {author} {\bibfnamefont {A.}~\bibnamefont
  {Lewkowycz}},\ and\ \bibinfo {author} {\bibfnamefont {J.}~\bibnamefont
  {Maldacena}},\ }\bibfield  {journal} {\bibinfo  {journal} {Journal of High
  Energy Physics}\ }\textbf {\bibinfo {volume} {2013}},\ \href
  {https://doi.org/10.1007/jhep11(2013)074} {10.1007/jhep11(2013)074} (\bibinfo
  {year} {2013})\BibitemShut {NoStop}%
\bibitem [{\citenamefont {Boyle}\ \emph {et~al.}(2020)\citenamefont {Boyle},
  \citenamefont {Dickens},\ and\ \citenamefont {Flicker}}]{Boyle_2020}%
  \BibitemOpen
  \bibfield  {author} {\bibinfo {author} {\bibfnamefont {L.}~\bibnamefont
  {Boyle}}, \bibinfo {author} {\bibfnamefont {M.}~\bibnamefont {Dickens}},\
  and\ \bibinfo {author} {\bibfnamefont {F.}~\bibnamefont {Flicker}},\ }\href
  {https://doi.org/10.1103/physrevx.10.011009} {\bibfield  {journal} {\bibinfo
  {journal} {Physical Review X}\ }\textbf {\bibinfo {volume} {10}},\ \bibinfo
  {pages} {011009} (\bibinfo {year} {2020})}\BibitemShut {NoStop}%
\bibitem [{\citenamefont {Boyle}\ and\ \citenamefont
  {Kulp}(2025)}]{Boyle_2025}%
  \BibitemOpen
  \bibfield  {author} {\bibinfo {author} {\bibfnamefont {L.}~\bibnamefont
  {Boyle}}\ and\ \bibinfo {author} {\bibfnamefont {J.}~\bibnamefont {Kulp}},\
  }\href {https://doi.org/10.1103/physrevd.111.046001} {\bibfield  {journal}
  {\bibinfo  {journal} {Physical Review D}\ }\textbf {\bibinfo {volume}
  {111}},\ \bibinfo {pages} {046001} (\bibinfo {year} {2025})}\BibitemShut
  {NoStop}%
\bibitem [{\citenamefont {Bose}\ \emph {et~al.}(2017)\citenamefont {Bose},
  \citenamefont {Mazumdar}, \citenamefont {Morley}, \citenamefont {Ulbricht},
  \citenamefont {Toroš}, \citenamefont {Paternostro}, \citenamefont {Geraci},
  \citenamefont {Barker}, \citenamefont {Kim},\ and\ \citenamefont
  {Milburn}}]{Bose_2017}%
  \BibitemOpen
  \bibfield  {author} {\bibinfo {author} {\bibfnamefont {S.}~\bibnamefont
  {Bose}}, \bibinfo {author} {\bibfnamefont {A.}~\bibnamefont {Mazumdar}},
  \bibinfo {author} {\bibfnamefont {G.~W.}\ \bibnamefont {Morley}}, \bibinfo
  {author} {\bibfnamefont {H.}~\bibnamefont {Ulbricht}}, \bibinfo {author}
  {\bibfnamefont {M.}~\bibnamefont {Toroš}}, \bibinfo {author} {\bibfnamefont
  {M.}~\bibnamefont {Paternostro}}, \bibinfo {author} {\bibfnamefont {A.~A.}\
  \bibnamefont {Geraci}}, \bibinfo {author} {\bibfnamefont {P.~F.}\
  \bibnamefont {Barker}}, \bibinfo {author} {\bibfnamefont {M.~S.}\
  \bibnamefont {Kim}},\ and\ \bibinfo {author} {\bibfnamefont {G.}~\bibnamefont
  {Milburn}},\ }\href {https://doi.org/10.1103/physrevlett.119.240401}
  {\bibfield  {journal} {\bibinfo  {journal} {Physical Review Letters}\
  }\textbf {\bibinfo {volume} {119}},\ \bibinfo {pages} {240401} (\bibinfo
  {year} {2017})}\BibitemShut {NoStop}%
\bibitem [{\citenamefont {Sahlmann}\ and\ \citenamefont
  {Zeiß}(2025)}]{sahlmann2025bellstatesfermionsloop}%
  \BibitemOpen
  \bibfield  {author} {\bibinfo {author} {\bibfnamefont {H.}~\bibnamefont
  {Sahlmann}}\ and\ \bibinfo {author} {\bibfnamefont {M.}~\bibnamefont
  {Zeiß}},\ }\href {https://arxiv.org/abs/2508.04704} {\bibinfo {title} {Bell
  states for fermions in loop quantum gravity}} (\bibinfo {year} {2025}),\
  \Eprint {https://arxiv.org/abs/2508.04704} {arXiv:2508.04704 [gr-qc]}
  \BibitemShut {NoStop}%
\bibitem [{\citenamefont {Livine}(2018)}]{Livine_2018}%
  \BibitemOpen
  \bibfield  {author} {\bibinfo {author} {\bibfnamefont {E.~R.}\ \bibnamefont
  {Livine}},\ }\href {https://doi.org/10.1103/physrevd.97.026009} {\bibfield
  {journal} {\bibinfo  {journal} {Physical Review D}\ }\textbf {\bibinfo
  {volume} {97}},\ \bibinfo {pages} {026009} (\bibinfo {year}
  {2018})}\BibitemShut {NoStop}%
\bibitem [{\citenamefont {Schrauth}\ \emph {et~al.}(2024)\citenamefont
  {Schrauth}, \citenamefont {Thurn}, \citenamefont {Goth}, \citenamefont
  {Portela}, \citenamefont {Herdt},\ and\ \citenamefont {Dusel}}]{hypertiling}%
  \BibitemOpen
  \bibfield  {author} {\bibinfo {author} {\bibfnamefont {M.}~\bibnamefont
  {Schrauth}}, \bibinfo {author} {\bibfnamefont {Y.}~\bibnamefont {Thurn}},
  \bibinfo {author} {\bibfnamefont {F.}~\bibnamefont {Goth}}, \bibinfo {author}
  {\bibfnamefont {J.~S.~E.}\ \bibnamefont {Portela}}, \bibinfo {author}
  {\bibfnamefont {D.}~\bibnamefont {Herdt}},\ and\ \bibinfo {author}
  {\bibfnamefont {F.}~\bibnamefont {Dusel}},\ }\href
  {https://doi.org/10.21468/SciPostPhysCodeb.34} {\bibfield  {journal}
  {\bibinfo  {journal} {SciPost Physics Codebases}\ ,\ \bibinfo {pages} {34}}
  (\bibinfo {year} {2024})}\BibitemShut {NoStop}%
\bibitem [{\citenamefont {Wei}\ and\ \citenamefont
  {Goldbart}(2003)}]{wei_2003_geometric}%
  \BibitemOpen
  \bibfield  {author} {\bibinfo {author} {\bibfnamefont {T.-C.}\ \bibnamefont
  {Wei}}\ and\ \bibinfo {author} {\bibfnamefont {P.~M.}\ \bibnamefont
  {Goldbart}},\ }\href {https://doi.org/10.1103/PhysRevA.68.042307} {\bibfield
  {journal} {\bibinfo  {journal} {Physical Review A}\ }\textbf {\bibinfo
  {volume} {68}},\ \bibinfo {pages} {042307} (\bibinfo {year}
  {2003})}\BibitemShut {NoStop}%
\bibitem [{\citenamefont {Weinbrenner}\ and\ \citenamefont
  {Gühne}(2025)}]{weinbrenner_2025_quantifying}%
  \BibitemOpen
  \bibfield  {author} {\bibinfo {author} {\bibfnamefont {L.~T.}\ \bibnamefont
  {Weinbrenner}}\ and\ \bibinfo {author} {\bibfnamefont {O.}~\bibnamefont
  {Gühne}},\ }\href {https://doi.org/10.1209/0295-5075/adffb5} {\bibfield
  {journal} {\bibinfo  {journal} {Europhysics Letters}\ }\textbf {\bibinfo
  {volume} {151}},\ \bibinfo {pages} {68001} (\bibinfo {year}
  {2025})}\BibitemShut {NoStop}%
\bibitem [{\citenamefont {Zhu}\ \emph {et~al.}(2010)\citenamefont {Zhu},
  \citenamefont {Chen},\ and\ \citenamefont {Hayashi}}]{zhu2010additivity}%
  \BibitemOpen
  \bibfield  {author} {\bibinfo {author} {\bibfnamefont {H.}~\bibnamefont
  {Zhu}}, \bibinfo {author} {\bibfnamefont {L.}~\bibnamefont {Chen}},\ and\
  \bibinfo {author} {\bibfnamefont {M.}~\bibnamefont {Hayashi}},\ }\href
  {https://doi.org/10.1088/1367-2630/12/8/083002} {\bibfield  {journal}
  {\bibinfo  {journal} {New Journal of Physics}\ }\textbf {\bibinfo {volume}
  {12}},\ \bibinfo {pages} {083002} (\bibinfo {year} {2010})}\BibitemShut
  {NoStop}%
\bibitem [{\citenamefont {Gschwendtner}\ \emph {et~al.}(2021)\citenamefont
  {Gschwendtner}, \citenamefont {Bluhm},\ and\ \citenamefont
  {Winter}}]{gschwendtner2021programmability}%
  \BibitemOpen
  \bibfield  {author} {\bibinfo {author} {\bibfnamefont {M.}~\bibnamefont
  {Gschwendtner}}, \bibinfo {author} {\bibfnamefont {A.}~\bibnamefont
  {Bluhm}},\ and\ \bibinfo {author} {\bibfnamefont {A.}~\bibnamefont
  {Winter}},\ }\href {https://doi.org/10.22331/q-2021-06-29-488} {\bibfield
  {journal} {\bibinfo  {journal} {Quantum}\ }\textbf {\bibinfo {volume} {5}},\
  \bibinfo {pages} {488} (\bibinfo {year} {2021})}\BibitemShut {NoStop}%
\end{thebibliography}%

\newpage

\appendix

\onecolumngrid
\section{Grasping Action of the 2-Area Operator}
\label{app:grasp}
The action of $S_v$ as defined in Eq.~\eqref{eq:area_op} can be calculated graphically using the grasping operation that is the action of $\vec X(e)$. The operator $\vec X(e_1) \times \vec X(e_2)$ acts by connecting the edges $e_1, e_2$ by coupling a ``virtual'' holonomy in spin 1 representation with a free spin 1 index in the middle:
\begin{align}
    \sum_{i_1, i_2} \epsilon_{i_1 i_2 i_3} X^{i_1}(e_1)X^{i_2}(e_2)\left( \vcenter{\hbox{\threeVal}} \right) = \vcenter{\hbox{\graspingEps}}
    \label{eq:graspingEps}
\end{align}
On a direct sum of irreducible representations on $e_1$ and $e_2$, this contraction acts on each factor according to a Leibniz rule. In contrast to Eq.~\eqref{eq:graspingEps}, the two grasped edges $e_1$ and $e_2$ could also coincide. Having summed over all combinations of pairs, the operator from Eq.~\eqref{eq:graspingEps} has to be applied again yielding, for instance, the contraction.
\begin{align}
    \sum_{i_1, i_2, i_3, i_4, i_5} \epsilon_{i_1 i_2 i_3} X^{i_1}(e_1)X^{i_2}(e_2) \epsilon_{i_4 i_5 i_3} X^{i_4}(e_1)X^{i_5}(e_3) \left( \vcenter{\hbox{\threeVal}} \right) = \vcenter{\hbox{\graspingEpsII}}
\end{align}

\section{Maximal Number of Maximally Mixed Marginals}\label{app:n_marginals}
From the main text, we know that $\SUT$-invariance and the AME property exclude each other for more than three parties. Nevertheless, there are planar maximally entangled states \cite{Doroudiani2020}, where all marginals containing neighboring parties are maximally mixed. In this section, we investigate the existence of such states having a global $\SUT$ symmetry and derive the maximum number of balanced bipartitions of an $\SUT$-invariant state such that the reduced states are maximally mixed.
\begin{lemma}\label{lem:bip_sep}
	Let $\ket\psi\in\HH_d^{\otimes 2n}$ be a $2n$-partite $\SUT$-invariant state and its marginals maximally mixed for three balanced bipartitions, i.e.\ the cardinality of all parts is $n$. Then, for any party $j$ there is a party $k$ such that $j$ and $k$ belong to different parts in all three of 
    the bipartitions.
\end{lemma}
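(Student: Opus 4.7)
The plan is to argue by contradiction and reduce the statement to \cref{lem:inv_mixed_nogo} applied to a $\UO$ subgroup of $\SUT$. Suppose some party $j$ admits no partner $k$ lying on the opposite side in each of the three bipartitions, and let $A_1, A_2, A_3 \subseteq \{1, \ldots, 2n\}$ denote the three size-$n$ halves that contain $j$. The hypothesis translates to $A_1^\complement \cap A_2^\complement \cap A_3^\complement = \emptyset$, equivalently $A_1 \cup A_2 \cup A_3 = \{1, \ldots, 2n\}$, so every party $k \neq j$ is contained in at least one $A_i$ together with $j$.

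By assumption $\rho_{A_i} = \ID/d^n$ is maximally mixed for each $i$, so tracing out the remaining $n-2$ parties of any $A_i$ that contains the pair $\{j,k\}$ gives
\begin{equation*}
    \rho_{\{j, k\}} = \Tr_{A_i \setminus \{j, k\}} \rho_{A_i} = \frac{\ID}{d^2} = \frac{\ID}{d_j} \otimes \rho_k,
\end{equation*}
with $\rho_k = \ID/d$. Since this holds for every $k \neq j$, the two-body marginals on $j$ are all of the product form whose nonexistence (for at least one $k$) is the content of \cref{lem:inv_mixed_nogo}.

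To close the contradiction, I would restrict $\SUT$-invariance to the $\UO$ subgroup generated by $J_z$; since $d>1$, this subgroup acts nontrivially under the adjoint action on every party, in particular on $j$. Invoking \cref{lem:inv_mixed_nogo} for party $j$ then yields some $l \neq j$ with $\rho_{\{j,l\}} \neq \frac{\ID}{d_j} \otimes \rho_l$, directly contradicting the displayed identity above. The one piece to check carefully is the nontriviality of the restricted $\UO$ action, which reduces to the observation that any nontrivial irrep of $\SUT$ has at least two distinct weights and therefore fails to commute with all diagonal unitaries; the rest of the argument is purely combinatorial, since the only quantum input is the standard fact that partial traces of the maximally mixed state are again maximally mixed.
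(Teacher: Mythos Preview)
Your proof is correct and follows essentially the same route as the paper: argue by contradiction, observe that the negation forces the three halves containing $j$ to cover all parties, deduce that every two-body marginal $\rho_{\{j,k\}}$ is maximally mixed, and invoke \cref{lem:inv_mixed_nogo} for the contradiction. You are slightly more explicit than the paper in two places—spelling out why partial traces of a maximally mixed marginal stay maximally mixed, and restricting to the $\UO$ subgroup generated by $J_z$ while verifying its adjoint action is nontrivial—but these are expository refinements rather than a different argument.
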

\begin{proof}
    We prove the statement by contradiction. W.l.o.g.\ let $j\in\mathcal P^i_1$ for all $i\in[3]$ and let the marginals of $\ket\psi$ be maximally mixed for balanced bipartitions $\{\mathcal{P}^i_{1/2}\}_{i=1}^3$ with $\mathcal P_1^1\cup \mathcal P_1^2\cup\mathcal P_1^3=[2n]$. This implies that all two-partite reduced states $\rho^{\{j,l\}}$ with $j\neq l\in[2n]$ are maximally mixed. This is in contradiction to the assumption of $\SUT$-invariance as shown in \cref{lem:inv_mixed_nogo}.
\end{proof}
Note that three bipartitions are needed in the statement of the Lemma, because $\mathcal P_1^1\cup \mathcal P_1^2\cup\mathcal P_1^3=[2n]$ and $j\in\mathcal P^i_1$ for all $i\in[3]$ could not be fulfilled otherwise.

\begin{figure}
    \centering
    \includegraphics[width=0.3\linewidth]{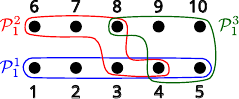}
    \caption{Party $4$ is contained in the parts $\mathcal P^1_1,\mathcal P^2_1,\mathcal P^3_1$ spanning the whole set of parties $\mathcal P^1_1\cup \mathcal P^2_1 \cup \mathcal P_1^3=\{1,\dots,10\}$. Maximally mixed marginals for these bipartitions and $\SUT$-invariance of the state exclude each other.}
    \label{fig:bip}
\end{figure}

This Lemma serves as the basis for the following Proposition.

\begin{prop}
	An $\SUT$-invariant $2n$-partite state $\ket\psi\in\HH_d^{\otimes 2n}$ has at most $2^{n-1}$ balanced bipartitions with maximally mixed marginals. This upper bound is saturated for states consisting of $n$ $d$-dimensional Bell pairs, 
    $\ket{\psi^{(d)}}$:
    \begin{equation}
    	\ket\psi=
        \ket{\psi^{(d)}}^{\otimes n}.
    \end{equation}
\end{prop}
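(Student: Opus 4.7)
\emph{Proof proposal.} The plan is to show that the $2n$ parties partition into disjoint ``buddy pairs'' compelled by $\mathcal{B}$, after which the bound follows by counting the bipartitions compatible with such a pairing. The only non-combinatorial ingredient will be \cref{lem:inv_mixed_nogo}, invoked via a $\UO$ subgroup of $\SUT$. First, I would fix a party $j \in [2n]$ and let $S_j$ denote the set of parties lying in the opposite part to $j$ in \emph{every} $b \in \mathcal{B}$. If $S_j = \emptyset$, then every $l \neq j$ shares a part with $j$ in some $b \in \mathcal{B}$; since the corresponding $n$-party marginal equals $\ID/d^n$, tracing out the remaining $n-2$ parties forces $\rho_{\{j,l\}} = \ID/d^2 = (\ID/d) \otimes \rho_l$ for every such $l$, contradicting \cref{lem:inv_mixed_nogo}. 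Hence $S_j \neq \emptyset$ for every $j$.

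Next, I would declare $j \equiv k$ iff $j$ and $k$ lie on the same part in every $b \in \mathcal{B}$; the resulting equivalence classes $C_1, \ldots, C_m$ are constant across all bipartitions, so each $b \in \mathcal{B}$ amounts to a $2$-coloring of $\{C_1, \ldots, C_m\}$. Since $S_j$ is closed under $\equiv$, it contains a whole class $C_{i'}$, and by symmetry $C_{i'}$ is ``always opposite'' to the class $C_i \ni j$. If two distinct classes $C_{i'}, C_{i''}$ were both always opposite to $C_i$, their members would share a part in every $b$, forcing $C_{i'} = C_{i''}$. Thus each class has a unique opposite partner, yielding a fixed-point-free involution $\sigma$ on $\{C_1, \ldots, C_m\}$, and in particular $m$ is even.

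Every $b \in \mathcal{B}$ must place $C_i$ and $\sigma(C_i)$ on opposite sides, so $b$ is determined by selecting, for each of the $m/2$ pairs, which class joins part~$1$. This produces at most $2^{m/2}$ ordered and thus $2^{m/2-1}$ unordered bipartitions, and since $m \leq 2n$ we obtain $|\mathcal{B}| \leq 2^{n-1}$. Saturation by $\ket{\psi^{(d)}}^{\otimes n}$ follows because a balanced bipartition has maximally mixed marginal exactly when every Bell pair is split across the cut, yielding $2^n/2 = 2^{n-1}$ admissible bipartitions. I expect the main obstacle to be Step~1: carefully translating ``no permanent buddy for $j$'' into sufficiently many factorized two-party marginals to invoke \cref{lem:inv_mixed_nogo}, in particular using that a maximally mixed marginal on a part of size $n$ restricts to maximally mixed marginals on any two parties therein. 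The subsequent class-pairing and counting steps are then essentially combinatorial bookkeeping built on this invariance-theoretic input.
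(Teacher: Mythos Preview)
Your argument is correct and hinges on the same invariance-theoretic input as the paper: both deduce from \cref{lem:inv_mixed_nogo} that every party $j$ admits at least one ``permanent buddy'' lying opposite to it in every bipartition of $\mathcal{B}$ (your $S_j\neq\emptyset$; the paper's \cref{lem:bip_sep}). The difference is purely in the combinatorial bookkeeping that follows. The paper records the buddy relation as a graph on the $2n$ parties, prunes redundant edges until only star-shaped connected components remain, and bounds $|\mathcal{B}|\le 2^{p-1}$ with $p\le n$ the number of stars. You instead coarsen to equivalence classes under ``always on the same side'', show these classes pair off via a fixed-point-free involution $\sigma$, and count $|\mathcal{B}|\le 2^{m/2-1}$ with $m\le 2n$ the number of classes. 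Your route is somewhat tidier: it bypasses the graph-simplification step and makes transparent that the buddy constraint, once lifted to classes, is exactly a perfect matching; the paper's star-graph picture is perhaps more visual but slightly more ad hoc. Both routes verify saturation by the Bell-pair state in the same way.
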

\begin{proof}
    If the marginals of $\ket\psi$ are maximally mixed for $m>2$ balanced bipartitions $\{\mathcal{P}_{1/2}^i\}_{i=1}^m$, for each party $j\in\mathcal P^i_{1}$  there must be a party $j^\prime\in \mathcal P^i_{2}$ for all $i\in[m]$ and the other way around. This gives rise to a graph $G$ with the $m$ parties as nodes. Two parties are connected if they are chosen to not appear in the same part of every bipartition. The maximal number of balanced bipartitions with maximally mixed marginals is upper-bounded by the maximal number of balanced bipartitions such that there are no connections inside the single parts, because the conditions presented in \cref{lem:bip_sep} are necessary, but not proven to be sufficient. 
    
    If two nodes of degree greater than one are connected, this connection can be deleted since every party is still connected to at least one other party. Thus, we just need to consider graphs with star subgraphs, i.e.\ subgraphs containing a center node connected to all other nodes (\emph{leaves}) without any further edges. Let $p$ be the number of star subgraphs. Then, at most $2^{p-1}$ balanced bipartitions without connections inside the single parts as the center of every star subgraph has to be in the opposite part to the leaves. This is maximized for $p=n$ and leads to an upper bound of $2^{n-1}$ for the number of balanced bipartitions with maximally mixed marginals.

    This bound can be saturated by a state consisting of $n$ Bell pairs being distributed among the $2n$ parties. If the Bell pairs are distributed between opposite parties, the resulting state is planar maximally mixed and can be used in models for the AdS/CFT-correspondence as in Ref.~\cite{Pastawski_2015}.
\end{proof}

\section{Geometric measure of entanglement}\label{app:GM}
The geometric measure of entanglement \cite{wei_2003_geometric,weinbrenner_2025_quantifying} of a state $\ket\psi\in\otimes_{i=1}^n \HH_{d_i}$ is defined as $E_G(\ket\psi)=1-\Lambda^2(\ket\psi)$ with
\begin{equation}
    \Lambda^2(\ket\psi)=\sup_{\ket\phi=\ket{\phi_1}\otimes\dots\otimes\ket{\phi_n}\in\otimes_{i=1}^n \HH_{d_i}} \abs{\braket{\phi|\psi}}^2
\end{equation}
being the maximum overlap with separable states. \change{We use the geometric measure to quantify the entanglement in the multipartite states presented in the examples for HITs in \cref{sec:examples} by exploiting the decomposition into (bipartite) Bell pairs.}
\begin{lemma}
    Consider a HIT network with $n$ sites where each site can share $d$-dimensional bipartite singlet states
    \begin{equation}
        \ket{\psi^-}=\sum_{i=0}^{d-1}(-1)^i\ket{i}\ket{d-i-1}
    \end{equation}
    with each of the other sites. The resulting global state $\ket\psi$ consists of in total $m$ distributed singlet states. The geometric measure of entanglement of the state $\ket\psi$ is given by
\begin{equation}
    E_G(\ket\psi)=1-\Lambda^2(\ket\psi)=1-d^{-m}.
\end{equation}
\end{lemma}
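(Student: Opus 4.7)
The plan is to reduce $\braket{\phi|\psi}$ to a tensor-network contraction on the multigraph whose vertices are the $n$ sites and whose $m$ edges are the singlets, then bound this contraction by an iterated Cauchy--Schwarz argument, and finally exhibit a product state that saturates the bound. Throughout, I treat $\ket\psi$ as the unnormalised product of the singlets in the statement, for which $\|\psi\|^2=d^m$; the normalised $\Lambda^2$ asserted in the lemma will then follow by dividing by $d^m$ at the end.

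First I would perform a local-unitary reduction. Writing $\ket{\psi^-}=(I\otimes\Omega)\ket{\Phi^+}$ with $\ket{\Phi^+}=\sum_k\ket{k}\ket{k}$ and the unitary $\Omega=\sum_k(-1)^k\ket{d{-}1{-}k}\bra{k}$, I absorb one factor of $\Omega$ into one qudit-half of every singlet. This replaces $\ket\psi$ by $U\bigotimes_s\ket{\Phi^+}_s$, where $U$ is a tensor product of single-qudit unitaries and hence in particular factorises across the $n$ sites; since site-local unitaries preserve $\Lambda^2$, it suffices to bound the overlap of a product state with $\ket\Phi:=\bigotimes_s\ket{\Phi^+}_s$. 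Expanding any unit-norm product $\ket\phi=\bigotimes_i\ket{\phi_i}$ in the computational basis yields
\begin{equation}
    \braket{\phi|\Phi} = \sum_{\vec k_E}\prod_{i=1}^n \overline{\phi_i(\vec k_{E_i})},
\end{equation}
where $E_i$ is the set of singlet halves at site $i$ and $\vec k_E\in\{0,\dots,d-1\}^m$ assigns one index per edge. This is a contraction of $n$ tensors (one per site) over the singlet multigraph. I would prove $|\braket{\phi|\Phi}|\le\prod_i\|\phi_i\|=1$ by an iterated Cauchy--Schwarz argument along an arbitrary ordering of sites $v_1,\dots,v_n$: letting $T_k$ denote the partial contraction over $\{v_1,\dots,v_k\}$, Cauchy--Schwarz on the sum over the "in" edges of $v_k$ (those connecting $v_k$ to an earlier vertex) gives $\|T_k\|_F\le\|T_{k-1}\|_F\,\|\phi_{v_k}\|$. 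The crucial observation is that the two factors produced by Cauchy--Schwarz depend on disjoint subsets of the remaining free indices --- namely the old free indices of $T_{k-1}$ that do not touch $v_k$, and the "out" edges of $v_k$ --- so the subsequent sum over the free indices of $T_k$ factorises into exactly $\|T_{k-1}\|_F^2\,\|\phi_{v_k}\|^2$, closing the induction.

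Finally, the product state $\ket{\phi_i}=\ket 0^{\otimes |E_i|}$ (pulled back through $U$) achieves $\braket{\phi|\psi}=1$, saturating the bound. Combining the two directions gives $\sup_\phi|\braket{\phi|\psi}|^2=1$, and dividing by $\|\psi\|^2=d^m$ yields $\Lambda^2(\ket\psi)=d^{-m}$, hence $E_G(\ket\psi)=1-d^{-m}$. The main obstacle is the combinatorial bookkeeping in the iterated Cauchy--Schwarz step: one has to verify that the residual free indices really do factorise into "past" and "future" pieces for every possible graph structure, including multi-edges between the same pair of sites. The assumption in the lemma that every singlet connects two distinct sites excludes self-loops in the multigraph and is what makes the induction go through cleanly; extending to singlets within a single site would require additional work.
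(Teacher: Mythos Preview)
Your proof is correct, and the iterated Cauchy--Schwarz argument on the tensor-network contraction goes through exactly as you describe: after Cauchy--Schwarz on the ``in'' indices, the two factors depend on the disjoint index sets $R_k$ (edges from $\{v_1,\dots,v_{k-1}\}$ to $\{v_{k+1},\dots,v_n\}$) and $O_k$ (out-edges of $v_k$), so the sum over free indices of $T_k$ factorises and the induction closes. Multi-edges cause no trouble, and the exclusion of self-loops is indeed what you need.

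The paper takes a different and much shorter route. After the same local-unitary reduction to a state with non-negative coefficients, it invokes a known multiplicativity result for the maximal product-state overlap of non-negative states (Zhu et al., 2010): peeling off one singlet at a time gives $\Lambda^2(\ket\psi)=\Lambda^2(\ket{\psi^-})\,\Lambda^2(\ket{\tilde\psi})=d^{-1}\Lambda^2(\ket{\tilde\psi})$, and iteration yields $d^{-m}$. So the paper outsources the hard inequality to a literature result, whereas you prove the needed bound from scratch. Your argument is self-contained and makes the mechanism transparent, but it is specific to the case where every edge carries a maximally entangled pair (so that the edge contraction is literally an identity); the paper's cited multiplicativity lemma is more general and would apply even if the shared bipartite states were not maximally entangled, as long as they have non-negative Schmidt coefficients. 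Conversely, your Cauchy--Schwarz chain does not actually use non-negativity anywhere---only that $\ket{\Phi^+}$ implements an identity wire---so it is in some sense orthogonal to the non-negativity hypothesis the paper leans on.
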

\begin{proof}
    First, one can separate one singlet state connecting sites $i,j$ from the rest of state: $\ket{\psi}=\ket{\psi^-}_{i,j}\otimes\ket{\Tilde{\psi}}$. As any bipartite pure state, the singlet state can be written in a suitable basis as a state having non-negative coefficients only. For such so-called non-negative states multiplicativity of the maximum overlap with the separable states is known \cite{zhu2010additivity}:
    \begin{equation}
        \Lambda^2(\ket\psi)=\Lambda^2(\ket{\psi^-})\Lambda^2(\ket{\Tilde{\psi}})= d^{-1}\Lambda^2(\ket{\Tilde{\psi}}).
    \end{equation}
    Iterating this argument leads to $\Lambda^2(\ket{\psi})=d^{-m}$ implying the claim.
\end{proof}
\section{Connection to the proof in Hayden et al. (2021)}\label{app:err_corr_proof}

In this appendix, we connect our proof of \cref{lem:inv_mixed_nogo} to a similar proof in the context of quantum error correction presented in Ref~\cite{Hayden_2021}. The latter is formulated on the level of channels and can be translated to our problem via the Choi-Jamio{\l}kowski isomorphism.  

Suppose there is a state $\rho=\pro\psi$ with $\ket\psi\in\otimes_{i=1}^n \HH_{d_i}$ and marginals $\rho_{\{k,l\}^\complement}=\frac{\ID}{d_k}\otimes\rho_l$ for all $l\neq k$. In addition, it is symmetric under the action of a continuous symmetry group $G$ with representation $R_1(g)\otimes\ldots\otimes R_n(g)$ and one generator of $R_k(g)$ being non-trivial. W.l.o.g.\ we choose $k=1$.\\
Since $\tr_{\{1\}^\complement}\pro{\psi}=\frac{\ID}{\dim(\HH_1)}$, we can define a channel $\mathcal{E_\psi}: \mathcal{B}(\HH_1)\to\mathcal{B}(\HH_2\otimes\dots\otimes\HH_n)$ via the Choi-Jamio{\l}kowski isomorphism such that
\begin{equation}
    \pro{\psi}=\sum_{i,j}\kb i j \otimes \EE_\psi(\kb ij).
\end{equation}
The channel $\EE$ is covariant under the action of the group $G$ with respect to the representation $\bar U_1$ on the input and $U_2\otimes\ldots\otimes U_n$ on the output (e.g.\ \cite[Lemma~11]{gschwendtner2021programmability}).
The property $\tr_{\{1,k\}^\complement}\pro{\psi}=\frac{\ID}{\dim(\HH_1)}\otimes\rho_k$ for all $k\in\{2,\dots,n\}$ implies that a single subsystem of the output is not correlated with the input and $\EE_\psi$ is an encoding of a quantum error correcting code. However, this is excluded for covariant encodings \cite{Hayden_2021}. Thus, the state $\ket\psi$ with the mentioned properties cannot exist. Setting $G=\UO$ leads to a proof of \cref{lem:inv_mixed_nogo}.

\section{Projectors for Higher Spin Subspaces}
\label{app:higher_spin}
In this section, we sketch the derivation of the projectors $P_x^j$ onto spin $j$ subspaces used in the calculation of geodesic lengths in Section \ref{sec:geodesic}. In the simplest non-trivial case, every holonomy acts on a tensor product space of $k=2$ qubits with spin coupling $\frac{1}{2} \otimes \frac{1}{2} \cong 0 \oplus 1$ and the projectors for singlet $P^0 = \frac{1}{4} \ket{ \vcenter{\hbox{\Pzero}}} \bra{\vcenter{\hbox{\Pzero}}}$ and triplet subspace $P^1 = \frac{1}{3} \ket{\vcenter{\hbox{\Pone}}} \bra{\vcenter{\hbox{\Pone}}}$. We will omit the subscript $x$ that indicates the point at which the projector acts, as it remains unchanged within the following construction and is not essential. The normalization factors are calculated requiring $(P^j)^2 = P^j$, which yields
\begin{align}
    (P^0)^2 &= \frac{1}{16} \ket{ \vcenter{\hbox{\Pzero}}} \braket{\vcenter{\hbox{\Pzero}} | \vcenter{\hbox{\Pzero}}} \bra{\vcenter{\hbox{\Pzero}}} = \frac{1}{4} \ket{ \vcenter{\hbox{\Pzero}}} \bra{\vcenter{\hbox{\Pzero}}} = P^0, \\
    (P^1)^2 &= \frac{1}{9} \ket{\vcenter{\hbox{\Pone}}} \braket{\vcenter{\hbox{\Pone}} | \vcenter{\hbox{\Pone}}} \bra{\vcenter{\hbox{\Pone}}} = \frac{1}{3} \ket{\vcenter{\hbox{\Pone}}} \bra{\vcenter{\hbox{\Pone}}} = P^1,
\end{align}
using the Hilbert-Schmidt norm. Higher spin projectors can be built iteratively by tensoring another spin $\frac{1}{2}$. The corresponding projectors are constructed from the projectors corresponding to the $k-1$ qubit case by coupling the $k^\text{th}$ spin either into a singlet, which reduces the total spin by $\frac{1}{2}$, or adding a spin $\frac{1}{2}$ with a symmetrization to the existing spin $j$. 

For instance the $k=3$ qubit Hilbert space decomposes into $(0 \oplus 1) \otimes \frac{1}{2} \cong \frac{1}{2} \oplus \frac{1}{2} \oplus \frac{3}{2}$. Evidently, the spin $\frac{1}{2}$ subspace is degenerate. The eigenstates of the corresponding projectors are
\begin{align}
    \vcenter{\hbox{\PonehalfI}} \text{ and } \vcenter{\hbox{\PonehalfII}} \text{ for } j = \frac{1}{2} \\
    \vcenter{\hbox{\Pthreehalf}} = \vcenter{\hbox{\PthreehalfII}} \text{ for } j = \frac{3}{2}.
\end{align}
By construction, the eigenstates are orthogonal. As before, the normalization factor is given by the squared Hilbert-Schmidt norm of the eigenstates, which yields
\begin{align}
    P^{\frac{1}{2}} &= \frac{1}{8} \ket{\vcenter{\hbox{\PonehalfI}}} \bra{\vcenter{\hbox{\PonehalfI}}} + \frac{1}{6} \ket{\vcenter{\hbox{\PonehalfII}}} \bra{\vcenter{\hbox{\PonehalfII}}} \label{eq:Ponehalf} \\
    P^{\frac{3}{2}} &= \frac{1}{4} \ket{\vcenter{\hbox{\PthreehalfII}}} \bra{\vcenter{\hbox{\PthreehalfII}}}.
\end{align}
The induction step $k \mapsto k+1$ is thus two-fold. First, couple the $(k+1)^\text{st}$ spin $\frac{1}{2}$ to the rest of the (already reduced) $k$ spins. Second, for every coupling $j \mapsto j - \frac{1}{2}$, contract a singlet state tensor $\vcenter{\hbox{\Pzero}}$ to the symmetrized spin $j$ subspace, and for every coupling $j \mapsto j + \frac{1}{2}$ tensor multiply the $(k+1)^\text{st}$ spin and symmetrize it together with the $2j$ symmetrized indices using the symmetrizer $\vcenter{\hbox{\spinj}}$.

It is easy to see that the maximal spin contribution, $j=\frac{k}{2}$, will always be one-dimensional and of the form $P^{k/2} = \frac{1}{2j+1} \ket{\vcenter{\hbox{\spinj}}} \bra{\vcenter{\hbox{\spinj}}}$. Lower spins $j$, however, can be higher-dimensional (see Eq.~\eqref{eq:Ponehalf}) and, although contributing with the same eigenvalue of the length operator, can have different overlaps with the HIT in question.

\end{document}